\newtheorem{theorem}{Theorem}[section]
\newtheorem{corollary}[theorem]{Corollary}
\newtheorem{lemma}[theorem]{Lemma}
\newtheorem{claimex}[theorem]{Claim}
\crefname{claimex}{Claim}{Claims}
\theoremstyle{definition}
\newtheorem{definition}[theorem]{Definition}
\newtheorem{example}[theorem]{Example}
\DeclareMathOperator*{\argmin}{arg\,min}
\newcommand*\diff{\mathop{}\!\mathrm{d}} 
\newcommand{\PP}{\mathcal{P}}
\newcommand{\e}{\mathrm{e}} 
\begin{document}

\title{Bicriteria Nash Flows over Time}

\date{}
\author[1]{Tim Oosterwijk}
\author[2]{Daniel Schmand}
\author[3]{Marc Schr\"{o}der}
\affil[1]{Vrije Universiteit Amsterdam, the Netherlands}
\affil[2]{University of Bremen, Germany}
\affil[3]{Maastricht University, the Netherlands}

\maketitle

\begin{abstract}
Flows over time are a natural way to incorporate flow dynamics that arise in various applications such as traffic networks. In this paper we introduce a natural variant of the deterministic fluid queuing model in which users aim to minimize their costs subject to arrival at their destination before a pre-specified deadline. We determine the existence and the structure of Nash flows over time and fully characterize the price of anarchy for this model. The price of anarchy measures the ratio of the quality of the equilibrium and the quality of the optimum flow, where we evaluate the quality using two different natural performance measures: the throughput for a given deadline and the makespan for a given amount of flow. While it turns out that both prices of anarchy can be unbounded in general, we provide tight bounds for the important subclass of parallel path networks. Surprisingly, the two performance measures yield different results here.
\end{abstract}


\section{Introduction}

%

The number of passenger vehicles in Europe and the traffic volume is steadily increasing \cite{eurostat}.
Traffic congestion impose a huge economic loss to the economy. It is estimated that the European Union suffers a cost of almost 100 billion euro due to congestion annually \cite{europeancommission}.
The control of traffic and optimization of traffic systems can lead to a significantly larger efficiency of the road network and is of central and still rising importance.

As such, there has been a huge effort to understand congestion using theoretical models. The dynamic model that gained most attention for modelling traffic is the deterministic fluid queuing model, already introduced by Vickrey~\cite{vickrey1969congestion}. The travel time of a traffic user heavily depends on choices that are being made by others in the network and such choices are made independently and selfishly. Therefore, it is natural to take a game theoretic approach and study Nash flows over time, also called dynamic equilibria, in which no single user has an incentive to unilaterally deviate from their strategy and improve their utility. Koch and Skutella~\cite{koch2011nash} studied Nash flows over time in this model taking into account that each traffic user wants to minimize their travel time and characterized such Nash flows by means of concatenations of static thin flows. The existence of Nash flows over time, even in multicommodity networks, is formally proven by Cominetti et al.~\cite{cominetti2015dynamic}. More structural insights are gained by Sering and Koch~\cite{sering2019nash} for the model with spillbacks, by Correa et al.~\cite{correa2019price} on the price of anarchy, by Cominetti et al.~\cite{cominetti2021longterm} on the long term behavior and by Graf et al.~\cite{graf2020dynamic} for the model with users that decide based on the current network conditions.

On the other hand, modern traffic planners often use complex simulations to predict traffic behavior. However, these models are often mathematically poorly understood. It is, for example, not always possible to compute exact equilibria. Recently, Ziemke et al.~\cite{ziemke2021flows} showed that there is a strong connection between flows induced by an agent-based simulation using the software MATSim~\cite{matsim} and theoretic Nash flows over time. This shows that great progress has been made on theoretical traffic models.

A common drawback of most of the theoretical models is the simplified assumption that road network users aim for minimizing their arrival time. However, in traffic networks in particular, users are not always that single-minded. Route choices of travellers are often based on multiple criteria. Users take other factors into consideration like the quality of the roads or a general preference over different modes of transport. Very often the route choice is based on an externally given deadline for arrival and travellers choose the \emph{nicest} option that guarantees arrival before the deadline. This multi-criteria optimization of routes is also covered in the traffic simulation MATSim~\cite{matsim} and has not been introduced in the deterministic fluid queuing model.

In this paper we extend the state-of-the-art theoretical traffic models with a multi-criteria objective function. We assume that users try to minimize costs subject to arriving at the sink before a given deadline. Here, costs could be thought of as an intrinsic preference a user has regarding the different route choices and queuing dynamics only play a role for the arrival time of a user. This different assumption on the behaviour leads to very different Nash flows compared to the standard model. In a similar direction, \cite{frascaria2020algorithms} recently proposed a model of flows over time that includes scheduling cost if users arrive earlier or later. As they minimize the sum of scheduling costs and travel times, their approach is not a bicriteria model like we consider.

From a game theoretic perspective, we are generally interested in the inefficiency of a Nash flow compared to an optimal flow. We measure this inefficiency using two natural performance measures. On the one hand we consider the \emph{throughput} objective, which for a given deadline measures the amount of flow that reached its destination before the deadline. On the other hand we consider the \emph{makespan} objective, which for a given amount of flow measures the amount of time that is needed for all the flow to reach the destination.
The fact that an {\em earliest arrival flow} exists~\cite{gale59earliestarrivalflow} implies that an optimal solution for one of the objectives corresponds to an optimal solution for the other objective. 
For a given performance measure, the \emph{price of anarchy} ($PoA$) is defined as the worst possible ratio between the quality of an optimal solution (usually denoted by $f^*$) and the quality of a Nash flow (usually denoted by $f$).
Even though optimal solutions of the two objectives are the same, they have different objective values. Thus they induce different notions for the price of anarchy: the throughput-$PoA$ and the makespan-$PoA$.

We give structural insights into Nash flows over time with costs and deadlines and characterize their inefficiency. First, we show that if path costs are different, any Nash flow over time must be layered.
That is, flow is iteratively sent over a single path until the point in time such that the last particle along this path arrives at the sink at the deadline while taking into account particles of all other paths. Second, we prove that such a layered Nash flow over time is guaranteed to exist in general. Third, we give a complete characterization of the inefficiency of Nash flows over time.
We prove the following three results. (i) In series-parallel networks, both prices of anarchy are unbounded. (ii) In parallel path networks the throughput-$PoA$ is at most $2$, or at most $\e/(\e-1)$ if all transit times are 0. (iii) In parallel path networks the makespan-$PoA$ is at most $\e/(\e-1)$, independent of transit time values. All bounds are tight.

Parallel path networks encompass an important subclass of graphs. It is for example known that price of anarchy bounds for regular Nash flows over time are stronger in parallel path networks than in general graphs \cite{correa2019price}. An important application of parallel networks are scheduling problems in which jobs have to be assigned to machines (see e.g. \cite{agt2007,pinedo2012scheduling}). Other models in which parallel networks are studied are cost sharing games \cite{von2013optimal} and Stackelberg routing games \cite{acemoglu2007competition,harks2021stackelberg,harks2019toll,johari2010investment,ozdaglar2008price}.

In the next section we formally introduce the model. \cref{sec:Nash} presents our results regarding the existence and structure of Nash flows over time. We continue in \cref{sec:PoA} with our results concerning the price of anarchy.

\section{The model}\label{sec:Model}
Let $G=(V,E)$ be a directed graph with two special vertices, a source $s\in V$ and a sink $t\in V$. Each edge $e\in E$ has a finite transit time $\tau_e \geq 0$, a capacity $\nu_e > 0$ and a finite cost $c_e \geq 0$. Additionally, we are given a deadline $D$. Flow has to travel from $s$ to $t$ and departs from $s$ at time $\theta$ with an inflow rate denoted by $u(\theta)$. We assume the inflow rate is constant, i.e., $u(\theta) = u$ for all $\theta$ for some finite $u > 0$. We identify a particle by the time at which it leaves the source, i.e., particle $\theta$ leaves $s$ at time $\theta$. In our model, particles would like to arrive at the sink before the deadline $D$. The following edge dynamics describe how the flow travels through the network. 

Every edge $e \in E$ is endowed with a function $f_e^+(\theta)$ describing the inflow rate into $e$ at time $\theta$. If the inflow rate $f_e^+(\theta)$ is larger than the edge capacity $\nu_e$, a queue will build at the tail of the edge at rate $f_e^+(\theta) - \nu_e$. We denote the mass of this queue at time $\theta$ by $z_e(\theta)$. As the queue empties at rate $\nu_e$, the queue will deplete at a rate equal to $f_e^+(\theta) - \nu_e$ if $f_e^+(\theta) < \nu_e$, at most until $z_e = 0$. This implies that the queuing time $q_e(\theta)$ that a particle that enters edge $e$ at time $\theta$ faces is equal to $z_e(\theta)/\nu_e$. Such a particle therefore leaves the edge at time $T_e(\theta) = \theta + z_e(\theta)/\nu_e + \tau_e$. The queue dynamics are thus determined as follows.
\[\frac{\diff z_e(\theta)}{\diff\theta}=\begin{cases}
f_e^+(\theta)-\nu_e &\text{ if }z_e(\theta)>0 \; ,\\
\max\{f_e^+(\theta)-\nu_e,0\} &\text{ if }z_e(\theta)=0 \; .
\end{cases}\]
Furthermore, these dynamics induce the following outflow rate function $f_e^-(\cdot)$.
\[f_e^-(\theta+\tau_e)=\begin{cases}
\nu_e &\text{ if }z_e(\theta)>0 \; ,\\
\min\{f_e^+(\theta),\nu_e\} &\text{ if }z_e(\theta)=0 \; .
\end{cases}\]
We define the cumulative inflow and outflow function of an edge as
\[
F_e^+(\theta) = \int_0^\theta f_e^+(\xi) \diff \xi \;, \quad
F_e^-(\theta) = \int_0^\theta f_e^-(\xi) \diff \xi \; .
\]
Therefore, the queue mass of an edge $e$ at time $\theta$ can alternatively be computed as $z_e(\theta) = F_e^+(\theta) - F_e^-(\theta+\tau_e)$.

%
A \emph{flow over time} is a set of 
edge inflow rates $(f_e^+)_{e\in E}$ 
satisfying the following flow conservation constraints for all vertices $V\setminus\{t\}$ and for 
all $\theta\geq 0$.
\[\sum_{e\in\delta^+(v)}f_e^+(\theta)-\sum_{e \in \delta^-(v)}f_e^-(\theta)=\begin{cases}
u &\text{ if }v=s \; ,\\
0 &\text{ if }v\neq s,t \; ,
\end{cases}\]
where $\delta^+(v)$ and $\delta^-(v)$ are the set of edges leaving and entering into a node $v$, respectively.

When flow particles in a flow over time arrive at an intermediate node $v \neq s,t$, they immediately continue their route along the following edge on their chosen path.

Let $\PP$ denote the set of all $s$,$t$-paths. Usually, for a path $P \in \PP$ we abuse notation and let $P$ refer to the path, the set of vertices on the path, or the set of edges on the path; as long as no confusion shall arise. For a flow over time $f$ and a path $P\in\PP$, we denote by $\ell^P_v(\theta)$ the time at which particle $\theta$ arrives at node $v$ using path $P$. The values $\ell^P_v(\theta)$ can iteratively be computed as $\ell^P_v(\theta) = T_{uv}(\ell^P_u(\theta))$, using $\ell^P_s(\theta) = \theta$ for all $P \in \PP$.

Observe that the particles on a path $P$ adhere to the first-in-first-out property, i.e.,\ $\ell_v^P(\theta_1) < \ell_v^P(\theta_2) \Leftrightarrow \theta_1 < \theta_2$ for all $v \in P$. 
However, note that $\ell^P_v(\theta)$ and $\ell^{P'}_v(\theta)$ can be different for two paths $P$ and $P'$ and that $\ell^P_v(\theta)$ also depends on particles routed through other paths, even if they originated at the source at a time later than $\theta$.

Most of the time, we will actually describe flows as inflow rates into paths rather than into edges. For a path $P \in \PP$ we define $f_P^+(\theta)$ as the rate of flow from $s$ into the first edge of the path. Since a flow over time adheres to the above flow conservation constraints, we can track a particle along its chosen path through the network, and therefore a set of path inflow rates uniquely induces a set of edge inflow rates. Formally, since $\ell^P_v(\theta)$ is an injective function for all $v \in V$ and $P \in \PP$ and for an edge $e = (v,w)\in E$ we can write $f_e^+(\theta) = \sum_{P\in \PP: e \in P} f_P^+\left(\left(\ell^P_v\right)^{-1}(\theta)\right)$.
A flow over time can therefore also be defined by a set of path inflow rates.

Informally, each flow particle wants to minimize its costs subject to arriving at the sink before the deadline $D$. For path $P\in \PP$, we define $c_P=\sum_{e\in P} c_e$. For a given flow over time $f$, the cost of a particle $\theta$ along path $P$ equals 
    \[
    c_P(\theta)= \begin{cases} c_P &\text{ if } \ell^P_t(\theta)< D \; , \\
    \infty &\text{ else} \; .
    \end{cases}
    \]

\begin{definition}A flow over time $f$ is a \emph{Nash flow (over time)} if for 
all $\theta\geq0$,
\[
f_P^+(\theta)>0\Rightarrow P\in \argmin_{P \in \PP}\{c_P(\theta)\} \; .
\]
\end{definition}
Informally, a flow over time is a Nash flow if no particle can deviate to a path with lower cost and still arrive before the deadline. As an immediate consequence, for a given Nash flow over time $f$ the paths with positive inflow at time $\theta$ have the same costs at time $\theta$, which we denote by $c(\theta)$ (independent of the path).


\section{Structure and existence of Nash flows}\label{sec:Nash}

In this section we derive structural results for Nash flows over time that leads us to introduce an important class of flows over time. We prove that within this class of flows over a time, a Nash flow is guaranteed to exist. Throughout this section, we assume without loss of generality that the paths are numbered such that $c_{P_1}\leq c_{P_2}\leq\ldots\leq c_{P_{|\PP|}}$. We start by deriving a set of necessary conditions for Nash flows over time.
\begin{lemma}\label{lem:mono}
For any Nash flow $c(\theta)$ is non-decreasing in $\theta$.
\end{lemma}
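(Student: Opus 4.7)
The plan is to argue by contradiction. Suppose there exist $\theta_1<\theta_2$ with $c(\theta_1)>c(\theta_2)$; the goal is to exhibit a path that would give particle $\theta_1$ strictly smaller cost than $c(\theta_1)$, which contradicts the Nash condition at time $\theta_1$.

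First, I pick a witness path for particle $\theta_2$. Since $c(\theta_2)<c(\theta_1)$, in particular $c(\theta_2)$ is finite, so there exists some path $P\in\PP$ with $f_P^+(\theta_2)>0$. By the Nash condition at time $\theta_2$, this path $P$ lies in $\argmin_{P'\in\PP}c_{P'}(\theta_2)$, so $c_P(\theta_2)=c_P=c(\theta_2)$; in particular, $\ell_t^P(\theta_2)<D$.

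Second, I transfer this witness to particle $\theta_1$ using FIFO. Since $\theta_1<\theta_2$ and $\ell_v^P(\cdot)$ is strictly increasing for every $v\in P$, we obtain $\ell_t^P(\theta_1)<\ell_t^P(\theta_2)<D$. Therefore $c_P(\theta_1)=c_P=c(\theta_2)<c(\theta_1)$, meaning that path $P$ offers particle $\theta_1$ a strictly better cost than $c(\theta_1)$. This contradicts the Nash condition at time $\theta_1$: if $c(\theta_1)<\infty$, any path used by particle $\theta_1$ fails to be in $\argmin_{P'}c_{P'}(\theta_1)$; if $c(\theta_1)=\infty$, the mere existence of a path with finite cost for $\theta_1$ already contradicts $c(\theta_1)=\infty$.

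The only subtle point to watch out for is that the FIFO inequality $\ell_t^P(\theta_1)<\ell_t^P(\theta_2)$ must be applied on a path $P$ that particle $\theta_1$ need not actually use. This is legitimate because $\ell_v^P(\theta)$ is defined purely via the iteration $\ell_v^P(\theta)=T_{uv}(\ell_u^P(\theta))$ along the edges of $P$, with $\ell_s^P(\theta)=\theta$. Each operator $T_e$ is monotone in its argument (the link travel time $z_e/\nu_e+\tau_e$ cannot turn the order of two successive entrants), so the strict monotonicity of $\ell_v^P(\cdot)$ holds as an intrinsic property of path $P$, independent of which particles are actually routed along $P$.
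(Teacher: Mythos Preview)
Your proof is correct and follows essentially the same approach as the paper's own proof: assume $\theta_1<\theta_2$ with $c(\theta_1)>c(\theta_2)$, pick a path $P$ used by particle $\theta_2$ (hence with $\ell_t^P(\theta_2)<D$), and invoke FIFO along $P$ to conclude $c_P(\theta_1)=c_P<c(\theta_1)$, contradicting the Nash condition at $\theta_1$. Your version is slightly more explicit than the paper's in justifying why FIFO applies even though $\theta_1$ may not actually be routed along $P$, and in separately handling the case $c(\theta_1)=\infty$, but the argument is the same.
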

\begin{proof}[Proof]
 Let $f$ be a Nash flow over time with $c(\theta_1)> c(\theta_2)$ and $\theta_1<\theta_2$. We derive a contradiction. Since $c(\theta_1)> c(\theta_2)$, we have that $c(\theta_2)<\infty$ and thus $\ell_t^P(\theta_2)<D$ for the chosen path $P$ of $\theta_2$. By the first-in-first-out principle, if $\theta_1$ chooses $P$, then $\ell_t^P(\theta_1)<\ell_t^P(\theta_2)<D$ and thus $c_P(\theta_1) = c_P(\theta_2)< c(\theta_1)$, contradicting the assumption that $f$ is a Nash flow over time.
\end{proof}

\begin{corollary}\label{cor:Phases}
Any Nash flow consists of phases, where in each phase flow is sent only into paths that have the same total cost.
\end{corollary}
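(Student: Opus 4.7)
The plan is to derive this as an almost immediate structural consequence of \cref{lem:mono} together with the finiteness of the set of distinct path costs. First I would recall that by the Nash condition, whenever $f_P^+(\theta) > 0$ we have $c_P = c(\theta)$, so at every time $\theta$ with $u > 0$ (which is all times, since $u$ is constant) the value $c(\theta)$ necessarily belongs to the finite set $\{c_{P_1}, c_{P_2}, \ldots, c_{P_{|\PP|}}\}$ of realized path costs.

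Next I would invoke \cref{lem:mono} to conclude that $c(\theta)$ is non-decreasing in $\theta$. A non-decreasing function taking values in a finite totally ordered set must be piecewise constant, with the preimage of each value forming an interval (possibly empty). Ordering these intervals by time gives a decomposition of $[0,\infty)$ into finitely many consecutive intervals $I_1, I_2, \ldots$ on which $c(\theta)$ is constant, and these are exactly the phases in the claim.

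Finally I would close by arguing that within a phase $I_k$ on which $c(\theta) \equiv c_{P_i}$, the Nash condition implies that any path $P$ with $f_P^+(\theta) > 0$ for some $\theta \in I_k$ must satisfy $c_P = c_{P_i}$; hence all such paths share the same total cost, as required.

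I do not expect any serious obstacle here: the statement is essentially a bookkeeping consequence of monotonicity plus finiteness. The only mild subtlety is to note that $c(\theta)$ is well-defined for every $\theta \geq 0$ because the constant positive inflow $u$ forces \emph{some} path to receive positive rate at every moment, so the finitely-many-values argument applies uniformly across the whole time horizon.
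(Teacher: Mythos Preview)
Your proposal is correct and mirrors the paper's approach: the paper states \cref{cor:Phases} as an immediate corollary of \cref{lem:mono} without further proof, and your argument (monotonicity of $c(\theta)$ combined with the finiteness of its range forces a piecewise-constant structure) is precisely the filling-in of those implicit details. The only tiny refinement is that $c(\theta)$ may also equal $\infty$ once no path can meet the deadline, so the finite range is $\{c_{P_1},\ldots,c_{P_{|\PP|}},\infty\}$; this does not affect your argument.
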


Using this result we define a special class of flows, which we name \emph{layered flows (over time)}. We prove that layered Nash flows always exist, and if the costs of all paths are different then for each network all Nash flows are layered flows. Additionally, also if not all paths have different costs, we show for a special class of networks that they are worst case flows among all Nash flows, in the sense that the least amount of flow arrives at the sink before the deadline. As such, they constitute an important class of Nash flows to show 
bounds on the inefficiency.

\begin{definition}\label{def_layeredNash}
Define a \emph{layered flow (over time)} by $0=\theta_0\leq \theta_1\leq \ldots \leq\theta_{|\PP|}\leq D$ such that for all $\theta\in[0,\theta_{|\PP|})$ we have
\[f_{P_i}^+(\theta) = \begin{cases}u(\theta) & \text{for $\theta \in [\theta_{i-1},\theta_i)$\;,}\\ 0 & \text{otherwise\;,}  \end{cases}\]
and $\ell_t^{P_i}(\theta_i)\geq D$, with equality if $\theta_{i-1}<\theta_i$, for all $i=1,\ldots,|\PP|$.
\end{definition}

Note that there are no restrictions on the inflow pattern after time $\theta_{|\PP|}$. Also, notice that if $\theta_{i-1}=\theta_i$ for some $i$, then path $P_i$ receives no flow that arrives before the deadline at the sink. 
Finally, observe that because path costs $c_{P_i}$ are assumed to be non-decreasing in $i$, a layered flow over time is a Nash flow.

We now show that if all paths have different costs, every Nash flow is a layered flow.
For a given flow over time $f$, let $\bar{\theta}_f$ be the last point in time such that all particles leaving the source in $[0,\bar{\theta}_f)$ arrive at the sink before the deadline $D$.
If the flow $f$ is clear 
we omit this dependence.

\begin{lemma}\label{thm:EQisLayered}
Let $c_{P_1}< c_{P_2}<\ldots< c_{P_{|P|}}$. If $f$ is a Nash flow, then $f$ is a layered flow over time.
\end{lemma}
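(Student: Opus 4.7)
The plan is to use \cref{lem:mono} and \cref{cor:Phases} to decompose any Nash flow $f$ into finitely many single-path phases of strictly increasing cost, then translate that decomposition into the parameters $\theta_0\leq\theta_1\leq\ldots\leq\theta_{|\PP|}$ of \cref{def_layeredNash}, and finally verify the boundary conditions on $\ell_t^{P_i}(\theta_i)$.

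First, \cref{cor:Phases} partitions $f$ into phases, each of which carries flow only on paths of equal total cost. Because all $c_{P_i}$ are assumed distinct, every phase must use exactly one path, and by \cref{lem:mono} the equilibrium cost $c(\theta)$ is non-decreasing in $\theta$, so the sequence of path indices used across successive phases is strictly increasing. I would enumerate the paths that actually carry flow on $[0,\bar\theta_f)$ as $P_{i_1},\ldots,P_{i_k}$ with $i_1<\cdots<i_k$, and let $0=\alpha_0<\alpha_1<\cdots<\alpha_k=\bar\theta_f$ denote the corresponding phase boundaries, so that $f$ sends the entire source inflow into $P_{i_j}$ during $[\alpha_{j-1},\alpha_j)$.

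Next, I would construct the required $\theta_i$ as follows: set $\theta_{i_j}:=\alpha_j$ for each $j=1,\ldots,k$, and for every index $i\notin\{i_1,\ldots,i_k\}$ collapse $\theta_i$ onto the neighboring used-phase boundary, namely $\theta_i:=0$ for $i<i_1$, $\theta_i:=\alpha_j$ for $i_j<i<i_{j+1}$, and $\theta_i:=\bar\theta_f$ for $i>i_k$. Then $\theta_0=0$, the sequence is non-decreasing, $\theta_{|\PP|}=\bar\theta_f\leq D$ (no particle arrives strictly before its own departure), and $\theta_{i-1}<\theta_i$ holds exactly when $i\in\{i_1,\ldots,i_k\}$, which makes $f_{P_i}^+$ agree with the layered pattern on $[0,\theta_{|\PP|})$.

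The main obstacle is verifying $\ell_t^{P_i}(\theta_i)\geq D$, with equality whenever $\theta_{i-1}<\theta_i$. The inequality follows from a Nash argument: at $\theta_i$ the flow transitions away from the strictly cheaper alternative $P_i$ to a path of strictly larger cost (or, at $\bar\theta_f$, ceases to arrive in time), and the Nash condition allows this only if $P_i$ is no longer viable, i.e.\ $\ell_t^{P_i}(\theta_i)\geq D$. For the reverse inequality on a used phase $i=i_j$, I would exploit continuity of $\ell_t^{P_{i_j}}(\theta)$ in $\theta$, inherited from continuity of each $z_e(\cdot)$ under the bounded constant inflow, together with the fact that the Nash condition forces $\ell_t^{P_{i_j}}(\theta)<D$ throughout $[\alpha_{j-1},\alpha_j)$, and then pass to the left-limit at $\alpha_j$. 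The delicate step is justifying this continuity cleanly from the queue dynamics and handling the right endpoint $\alpha_k=\bar\theta_f$, where the equality $\ell_t^{P_{i_k}}(\bar\theta_f)=D$ hinges on the very definition of $\bar\theta_f$ as the supremum of departure times with on-time arrival.
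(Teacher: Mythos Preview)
Your proposal is correct and follows the same core idea as the paper: distinct path costs force a unique argmin at every time in $[0,\bar\theta_f)$, and \cref{lem:mono} then rules out any non-monotone reuse of a path, yielding the single-path block structure in increasing cost order. You are in fact more thorough than the paper's three-line argument, which establishes the block structure by contradiction (a $u$--$0$--$u$ inflow pattern on some path would violate monotonicity of $c(\theta)$) but does not spell out the boundary conditions $\ell_t^{P_i}(\theta_i)\geq D$ with equality on active phases; your Nash-plus-continuity argument for those conditions is correct and fills in what the paper leaves implicit.
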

\begin{proof}
 First, note that if all paths have different costs, $\argmin_{P \in \PP}\{c_P(\theta)\mid \ell^P_t(\theta)< D\}$ is either empty or contains a single element. Thus, in a Nash flow over time, the inflow to any path is either $0$ or $u$ for all $\theta\in[0,\bar{\theta})$. Suppose $f$ is not as claimed. Then, there are $\theta_1 < \theta_2 < \theta_3<\bar{\theta}$ with $f_P^+(\theta_1)=u, f_P^+(\theta_2)=0, f_P^+(\theta_3)=u$. This contradicts \cref{lem:mono}.
 \end{proof}

\begin{theorem}\label{thm:exi}
There always exists a layered Nash flow over time.
\end{theorem}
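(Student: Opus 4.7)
The plan is to construct a layered flow inductively, one path at a time in order of non-decreasing cost, and then to invoke the observation following Definition~\ref{def_layeredNash} that every layered flow is already a Nash flow over time. Set $\theta_0 := 0$. Suppose that $\theta_0 \leq \theta_1 \leq \cdots \leq \theta_{i-1} \leq D$ together with the inflow pattern sending rate $u$ into $P_j$ on each $[\theta_{j-1}, \theta_j)$ for $j < i$ have already been defined. Consider the tentative extension that continues by sending rate $u$ into $P_i$ on $[\theta_{i-1}, \theta)$, and let $A(\theta) := \ell_t^{P_i}(\theta)$ denote the resulting arrival time at the sink of the particle leaving $s$ at time $\theta$ along $P_i$. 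Set $\theta_i := \theta_{i-1}$ if $A(\theta_{i-1}) \geq D$, and otherwise let $\theta_i$ be the smallest $\theta \geq \theta_{i-1}$ for which $A(\theta) = D$.

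For this choice to be well defined and to satisfy $\theta_i \leq D$, two properties of $A$ suffice. First, $A$ is continuous on $[\theta_{i-1}, \infty)$, which follows from the fluid dynamics: under a piecewise-constant inflow pattern each queue mass $z_e(\theta)$ is piecewise linear and continuous in $\theta$, so the edge-exit-time maps $T_e$ and therefore the composition $\ell_t^{P_i}$ are continuous in $\theta$. Second, $A(\theta) \geq \theta + \tau_{P_i} \geq \theta$, so in particular $A(D) \geq D$. Combining these with the case hypothesis $A(\theta_{i-1}) < D$ and the intermediate value theorem yields the required $\theta_i \in [\theta_{i-1}, D]$ with $A(\theta_i) = D$.

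Iterating this step for $i = 1, \ldots, |\PP|$ produces $0 = \theta_0 \leq \theta_1 \leq \cdots \leq \theta_{|\PP|} \leq D$ together with an inflow pattern on $[0, \theta_{|\PP|})$ matching Definition~\ref{def_layeredNash} exactly, with $\ell_t^{P_i}(\theta_i) \geq D$ always and equality whenever $\theta_{i-1} < \theta_i$; the inflow on $[\theta_{|\PP|}, \infty)$ may be chosen arbitrarily, since the definition imposes no constraint there. By the observation immediately after Definition~\ref{def_layeredNash}, this layered flow is a Nash flow over time. The main difficulty in carrying this out rigorously is the continuity of $A(\theta)$, because varying the candidate right endpoint $\theta$ changes not only the departure time of the tagged particle but also the entire queue profile it encounters along $P_i$; this continuity ultimately reduces to the standard fact that fluid queue masses depend continuously on piecewise-constant inflow patterns, and is the only non-routine ingredient of the argument.
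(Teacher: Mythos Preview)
Your inductive construction has a genuine gap: it ignores the coupling between different paths through shared edges. When you fix $\theta_i$ you compute the arrival time $A(\theta)=\ell_t^{P_i}(\theta)$ under the \emph{tentative} flow that only routes mass into $P_1,\ldots,P_i$. But Definition~\ref{def_layeredNash} requires $\ell_t^{P_i}(\theta_i)=D$ (when $\theta_{i-1}<\theta_i$) with respect to the \emph{final} flow, which also sends mass into $P_{i+1},\ldots,P_{|\PP|}$. In a general network these later layers can change the queues that particles on $P_i$ encounter. Concretely, suppose $P_i$ and some $P_j$ with $j>i$ share an edge $e=(v,w)$, and the prefix of $P_j$ up to $v$ is much shorter than the prefix of $P_i$. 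Then particles entering $P_j$ at times $\ge\theta_i$ can reach $v$ \emph{before} the last $P_i$-particles do, inflate the queue on $e$, and push $\ell_t^{P_i}(\theta_i)$ strictly above $D$. Your chosen $\theta_i$ then no longer satisfies the equality in Definition~\ref{def_layeredNash}, and worse, particles slightly before $\theta_i$ on $P_i$ now have cost $\infty$ while some later path may still be feasible for them, so the resulting flow need not even be a Nash flow.

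This circularity --- each $\theta_i$ depends on all the others through the shared-edge queue dynamics --- is precisely why the paper resorts to Brouwer's fixed-point theorem: the map $g$ evaluates the cutoff time on each path against the flow generated by \emph{all} current guesses $(\theta_1,\ldots,\theta_{|\PP|})$ simultaneously, and a fixed point yields a globally consistent vector. Your greedy, path-by-path argument would be valid in parallel path networks (where no edges are shared and later layers cannot influence earlier arrival times), but Theorem~\ref{thm:exi} is stated for arbitrary networks, and there the inductive step does not close.
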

\begin{proof}
The result follows from the existence of a fixed-point according to Brouwer's fixed point theorem of the function $g:[0,D]^{|\PP|}\rightarrow [0,D]^{|\PP|}$ defined as follows. Given a vector $\theta=(\theta_1,\ldots,\theta_{|\PP|})\in [0,D]^{|\PP|}$, define $\bar\theta_0=0$ and $\bar\theta_i=\max_{j=1,\ldots,i}\theta_j$ for all $i=1,\ldots,k$. For a given $i$, consider the flow over time $f$ with non-constant inflow
\[f_{P_i}^+(\theta) = \begin{cases}u & \text{ for $\theta \in [\bar\theta_{i-1},\infty)$\;,}\\ 0 & \text{otherwise\;,}  \end{cases}\quad\text{ and}\quad
 f_{P_j}^+(\theta) = \begin{cases}u & \text{ for $\theta \in [\bar\theta_{j-1},\bar\theta_j)$\;,}\\ 0 & \text{otherwise\;,}  \end{cases}\] for all $j \neq i$. Define 
 \[g_i(\theta) =\begin{cases}
 \sup\{\theta \mid  \ell^{P_i}_t(\theta)< D\} &\text{ if } \ell^{P_i}_t(\bar\theta_{i-1})< D\;, \\
 \bar\theta_{i-1} &\text{ if } \ell^{P_i}_t(\bar\theta_{i-1})\geq D\;.
 \end{cases} \]
If $\theta$ is a fixed point of $g$, then $f$ defined by $f_{P_i}^+(\theta)=u$ for $\theta \in [\bar\theta_{i-1},\bar\theta_i)$ and 0 otherwise for all $P_i \in \PP$ is a layered Nash flow by definition of $g$.
\end{proof}

\subsection{Parallel path networks}

We say a graph is a parallel path network if all $s,t$-paths are edge disjoint. We say a graph is a parallel link network if every edge is of the form $(s,t)$. In a parallel link network, every path in $\PP$ consists of a single edge, and therefore, we sometimes write $e$ instead of $P$ whenever $P = \{e\}$.

In the next section we will prove tight constant bounds for the inefficiency of Nash flows in parallel path networks. To do this, we show that in layered Nash flows the least amount of flow arrives at the sink in a given time horizon among all Nash flows in these networks. For a given deadline $D$, let $M_f$ denote the amount of flow that arrives under $f$ at the sink before the deadline.

\begin{theorem}\label{lem:par}
Consider a parallel path network. There exists a layered Nash flow over time $f$ with $M_f \leq M_{f'}$ for all Nash flows over time $f'$.
\end{theorem}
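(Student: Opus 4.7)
My plan is, given any Nash flow $f'$, to construct a layered Nash flow $f$ with $M_f \leq M_{f'}$; the theorem then follows by letting $f^\star$ be a layered Nash flow minimising $M$ (such a minimiser exists because a layered flow is determined, within each cost class, by the ordering of the paths of that cost, so there are only finitely many layered Nash flows). Since in any Nash flow $h$ every particle leaving the source in $[0, \bar{\theta}_h)$ reaches the sink by the deadline $D$, one has $M_h = u \bar{\theta}_h$, so it suffices to produce $f$ with $\bar{\theta}_f \leq \bar{\theta}_{f'}$. By \cref{cor:Phases} and \cref{lem:mono} the given Nash flow $f'$ partitions into successive phases $i = 1, \ldots, k$ in which all paths with positive inflow share a common cost $c_i$, with $c_1 < c_2 < \cdots < c_k$; let $S_i$ be the set of paths with positive inflow in phase $i$ of $f'$ and let $a_i^{(f')}, b_i^{(f')}$ be the start and end times of that phase.

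I would construct $f$ with the same phase decomposition but with the paths within each phase used sequentially: in phase $i$ of $f$, the paths in $S_i$ are used one at a time at full rate $u$ in an order $\pi_i$ (to be specified), each until its last source-particle arrives exactly at $D$. By \cref{def_layeredNash} this defines a layered Nash flow. Let $g_i(a)$ denote the end time of this sequential phase-$i$ allocation starting at time $a$ with empty queues on the paths in $S_i$---a condition that holds in both $f$ and $f'$ because in a parallel path network the paths in $S_i$ share no edges with paths used in any earlier phase. The function $g_i$ is non-decreasing in $a$: it is a composition of single-path updates $a \mapsto a + \nu_P(D - a - \tau_P)/u$, each of which has derivative $1 - \nu_P/u \in [0,1]$ in the relevant regime $u \geq \nu_P$. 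I would then induct on $i$ to conclude $b_i^{(f)} \leq b_i^{(f')}$, using the inductive hypothesis $a_i^{(f)} = b_{i-1}^{(f)} \leq b_{i-1}^{(f')} = a_i^{(f')}$, monotonicity of $g_i$, and the key sub-lemma below; the conclusion $\bar{\theta}_f = b_k^{(f)} \leq b_k^{(f')} = \bar{\theta}_{f'}$ then follows.

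The key sub-lemma, which is the main obstacle of the proof, states that for any set $S$ of edge-disjoint, equal-cost paths starting simultaneously at time $a$ with empty queues, there exists an ordering $\pi$ of $S$ such that the sequential end-time $g^\pi(a)$ is at most the end time of the corresponding phase under any Nash allocation of rate $u$ across $S$. Granting this, the induction closes as $b_i^{(f)} = g_i(a_i^{(f)}) \leq g_i(a_i^{(f')}) \leq b_i^{(f')}$. The parallel-path (edge-disjointness) assumption is essential: it decouples each path's queue dynamics from the others' inflows and caps the total flow that $P$ can absorb before $D$ at $\nu_P(D - a_P - \tau_P)$, where $a_P$ is the first time flow enters $P$. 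I expect to establish the sub-lemma by an exchange argument: if two paths $P, P'$ receive positive simultaneous rates at some time during the phase in a non-layered Nash allocation, shift the rate to one of them for a short interval so that only one is used at a time, in a direction that does not increase the phase end-time, and iterate. Verifying that a suitable direction of shift always exists, that each swap preserves the Nash property, and that the iteration terminates at a sequential allocation with no larger phase end-time forms the technical heart of the proof.
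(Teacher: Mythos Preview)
Your inductive, phase-by-phase strategy is a genuinely different route from the paper's proof. The paper first reduces to parallel link networks (via \cite[Lemma~7]{correa2019price}), then transforms an arbitrary Nash flow into a layered one by a direct block-reshuffling construction: for each full edge it carves out the earliest contiguous block of inflow that reproduces that edge's outflow on $[D-\varepsilon_i,D)$, postpones the displaced mass onto the remaining edges (which can only enlarge their queues), and finally runs an explicit algorithm (\cref{alg:Monotonicity}, \cref{clm:TurnIntoLayered}) to restore constant inflow. Your scheme is cleaner in outline, and the monotonicity of $g_i$ is correct, but two real gaps remain.

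First, the sub-lemma is doing all the work and the exchange argument you sketch does not obviously go through. Shifting rate between two simultaneously used paths $P,P'$ at a single instant moves their unavailability times in opposite directions, so the phase end time---governed by the \emph{last} equal-cost path to become unavailable---need not move monotonically under a local swap; you must exhibit a swap direction that never increases the phase end, show it preserves the Nash property, and argue termination, none of which is immediate for a continuum of particles. The paper's block construction is precisely a device to avoid this bidirectional exchange: it commits each edge to a single earliest block and only ever \emph{postpones} the remainder, so the comparison is one-directional. Second, your $f$ uses only the paths in $S_i$ in phase $i$, but a layered flow per \cref{def_layeredNash} must account for \emph{every} path $P$ of cost $c_i$: if some $P\notin S_i$ still has $\ell_t^P\big(g_i(a_i^{(f)})\big)<D$, then assigning it zero duration violates the layered condition, while assigning it positive duration pushes the phase end up and is not covered by your induction as stated. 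This second gap is patchable---the Nash property of $f'$ forces $\tau_P\ge D-b_i^{(f')}$ for such $P$, and then the single-path update applied at any $a\le b_i^{(f')}$ stays below $b_i^{(f')}$---but it needs to be said, and you should also invoke the reduction to parallel links so that the update formula $a\mapsto a+\nu_P(D-a-\tau_P)/u$ is valid.
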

\begin{proof}
By \cite[Lemma 7]{correa2019price} we can reduce a parallel path network to a parallel link network without changing the flow dynamics. Therefore, to prove results in parallel path networks, it suffices to prove them in parallel link networks.

Consider a parallel link network. Without loss of generality we assume that $\nu_e\leq u$ for all $e\in E$. Consider a Nash flow $f$. We show the theorem by proving that there exists a layered Nash flow $f''$ with $M_{f''}\leq M_f$. Let $\bar{\theta}$ be the first particle that does not arrive before $D$ under $f$. Observe that $M_f = u \bar{\theta}$. If $\bar{\theta} = D$, we immediately conclude that $f$ is an optimal flow and thus there exists a layered Nash flow $f''$ with $M_{f''}\leq M_f$.

Let us assume $\bar{\theta} < D$. Recall that \cref{cor:Phases} implies that $f$ consists of different phases, where in each phase $\phi$ it sends inflow into a set of edges $E_{\phi}$ having the same costs. We will transform $f$ step by step into in intermediary layered Nash flow $f'$ for non-constant inflow with $M_{f'} \leq M_f$. We initialize $f' = f$ and for each phase in which $E_{\phi}$ is not a singleton, we apply the following iterative procedure to update $f'$.

We call an edge $e_i \in E_{\phi}$ \emph{full} if there is an $\varepsilon>0$ such that $f_{e_i}^-(\theta) = \nu_{e_i}$ for all $\theta \in [D-\varepsilon, D)$. Let $\varepsilon_i$ be the maximum such value of $\varepsilon$ for each full edge $e_i$. Note that in the Nash flow $f$ we have $\tau_{e'} \geq D-\bar{\theta}$ for all edges $e'$ that are not full. Otherwise, the particle leaving the source at time $\bar{\theta}$ could take edge $e'$ and strictly improve its cost.

For each full edge $e_i \in E_{\phi}$, define $a_i=D-\varepsilon_i-\tau_{e_i}$.
Let $e_1$ be the edge with the smallest value of $a_i$ and denote $\tau_1 = \tau_{e_1}$ and $\nu_1 = \nu_{e_1}$ and let $\hat{\theta}_1=\sup\{\theta\mid f_{e_1}^+(\theta)>0\text{ and }\ell_t^{e_1}(\theta)<D\}$.
Define $b_1 = a_1 + \varepsilon_1\cdot \nu_1/u$ and let $I_1 = \left[a_1, b_1\right)$. We will adjust $f'$ for $\theta\in[a_1,b_1)$ and later also for $\theta \in [b_1,\hat{\theta}_1)$. First, set $f_{e_1}^{\prime+}(\theta)=u$ and $f_{e'}^{\prime+}(\theta)=0$ for all $e'\neq e_1$, for all $\theta\in I_1$. $I_1$ is called a \emph{block} for edge $e_1$. In words, in a block for $e_1$ all flow is sent only into $e_1$. Additionally, set $f_{e_1}^{\prime+}(\theta) = 0$ for all $\theta \in [b_1, \hat{\theta}_1)$. Observe that the amount of flow on $e_1$ in the block $I_1$ is chosen such that $f_{e_1}^{\prime-}(\theta) = \nu_1$ for all $\theta \in [D-\varepsilon_i, D)$. Furthermore, note that $[a_1,b_1)$ is the shortest time interval with this property and therefore $b_1 \leq \hat{\theta}_1$.

Now consider the set of edges $E_1 \subseteq E_{\phi} \setminus \{e_1\}$ which had a measurable positive inflow in $I_1$ under $f$. Note that the amount of flow sent into $e_1$ in $[a_1,\hat{\theta}_1)$ has not changed. Therefore, we can send the same amount of flow into each edge in $E_1$ within $[a_1,\hat{\theta}_1)$.  
To achieve this, in $f'$, we divide the flow particles that were sent into $e_1$ under $f$ in $[b_1, \hat{\theta}_1)$ to the edges in $E_1$ proportionally to the amount of flow deleted during $[a_1,b_1)$. 

Comparing $f$ and $f'$ for an edge $e'\in E_1$, we removed some inflow in $[a_1,b_1)$ and added the same amount in $[b_1,\hat{\theta}_1)$. Given that the same amount of flow will enter $e'$ and we only postponed some particles, this transformation yields a queue on $e'$ at time $\hat{\theta}_1$ that is at least as large as under $f$. In case some particles do not arrive at $t$ before $D$ due to this transformation, we will restore the Nash flow over time condition later.

Now we repeat the above procedure for all edges $E_{\phi}\setminus \{e_1\}$ with positive inflow, making further changes to the current flow $f'$. This starts with defining full edges and recomputing the values of $\varepsilon_i$ and $a_i$ for all $e_i \in E_{\phi} \setminus \{e_1\}$.
Observe that because we choose $a_i$ as small as possible in every iteration, the blocks are created in chronological order and therefore we do not change blocks that have previously been created.

After the iterative procedure for phase $\phi$, delete all inflow into the edges in $E_{\phi}$, except for the flow sent in the blocks. Flow is deleted by setting the inflow into the network equal to 0 at the corresponding time interval. The inflow into each edge in $E_{\phi}$ therefore consists of one block or is always $0$. This terminates the procedure for the current phase.

We continue by applying the procedure to the next phase. After completing the procedure for all phases, all particles that are not deleted arrive at the sink in time. Additionally, no particle has an incentive to switch to an edge with a lower cost because all those edges are full or too long and thus it will not arrive at the sink before the deadline. Thus, the original flow $f$ has been transformed into a layered Nash flow $f'$ with non-constant inflow. In particular, the particle that departed at time $\bar{\theta}$ in $f$ will still not arrive before $D$ in $f'$, so $\bar{\theta}_{f'} \leq \bar{\theta}_f$. 

We finish the proof by turning $f'$ into a layered Nash flow $f''$ with the same constant inflow as the original flow, such that $\bar{\theta}_{f''} \leq \bar{\theta}_{f'}$. We provide an algorithmic proof for this step in \cref{clm:TurnIntoLayered}.
\end{proof}

\begin{claimex}\label{clm:TurnIntoLayered}
Applying \cref{alg:Monotonicity} on the current Nash flow $f'$ with non-constant inflow turns it into a layered Nash flow $f''$ such that $\bar{\theta}_{f''} \leq \bar{\theta}_{f'}$.
\end{claimex}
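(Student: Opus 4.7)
The plan is to treat Algorithm \ref{alg:Monotonicity} as a left-to-right sweep through the paths in non-decreasing cost order (equivalently, the chronological order of the blocks produced in the proof of \cref{lem:par}), shifting each block of $f'$ to sit flush against its predecessor in $f''$ and extending it until the last particle to enter still reaches $t$ exactly at the deadline. After describing the algorithm I would verify separately that the output $f''$ is a layered Nash flow and that $\bar{\theta}_{f''} \leq \bar{\theta}_{f'}$.

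Concretely, after invoking the reduction of \cite[Lemma~7]{correa2019price} to work in a parallel link network, I initialise $\theta_0 = 0$ and iterate over $P_1, P_2, \ldots, P_{|\PP|}$. At step $i$, if path $P_i$ is feasible at time $\theta_{i-1}$ in the partially constructed flow, I set $\theta_i$ to the unique value with $\ell_t^{P_i}(\theta_i) = D$ when $u$ is sent into $P_i$ on $[\theta_{i-1},\theta_i)$; otherwise I set $\theta_i=\theta_{i-1}$. Checking that $f''$ is layered then amounts to establishing $\ell_t^{P_i}(\theta_i) \geq D$ for all $i$, with equality on used paths by construction. For unused paths, $P_i$ has an empty queue in the parallel link model, so $\ell_t^{P_i}(\theta_{i-1}) = \theta_{i-1} + \tau_{P_i}$, and the inequality follows from the Nash property of the original $f$ together with the invariant (proved next) that the relevant $\theta_{j-1}$ does not exceed the corresponding quantity in $f'$.

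For the bound $\bar{\theta}_{f''} \leq \bar{\theta}_{f'}$ the key observation is the closed-form arrival time on a single parallel link of capacity $\nu_e < u$ receiving inflow $u$ on $[a,b)$ and nothing earlier: the last particle arrives at $t$ at $a + (b-a)u/\nu_e + \tau_e$, so the largest admissible right endpoint as a function of $a$ is the affine map
\[
b_{\max}(a) \;=\; \Bigl(1 - \tfrac{\nu_e}{u}\Bigr)\,a \;+\; \tfrac{\nu_e}{u}(D - \tau_e),
\]
whose slope lies in $[0,1)$. The full-edge construction in the proof of \cref{lem:par} guarantees $b_{e_{i_j}} = b_{\max}(a_{e_{i_j}})$ for every block of $f'$, and the algorithm defines $\theta_j = b_{\max}(\theta_{j-1})$. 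Subtracting gives
\[
\theta_j - b_{e_{i_j}} \;=\; \Bigl(1 - \tfrac{\nu_{e_{i_j}}}{u}\Bigr)\bigl(\theta_{j-1} - a_{e_{i_j}}\bigr),
\]
so it suffices to prove $\theta_{j-1} \leq a_{e_{i_j}}$ by induction on $j$, using $b_{e_{i_{j-1}}} \leq a_{e_{i_j}}$ (the blocks of $f'$ are disjoint and chronological) and the inductive hypothesis $\theta_{j-1} \leq b_{e_{i_{j-1}}}$. Taking $j$ equal to the index of the final block yields $\theta_k \leq b_{e_{i_k}} \leq \bar{\theta}_{f'}$.

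I expect the main obstacle to be the verification of the layered condition at skipped paths, particularly across phases of $f$ in which the proof of \cref{lem:par} produced several blocks at a common cost. Within such a phase the block construction orders edges by minimal $a_i$ rather than by cost, so the induction step must be run intra-phase as well as inter-phase; once the invariant $\theta_{j-1} \leq a_{e_{i_j}}$ is maintained throughout, infeasibility of a skipped path in $f$ (a consequence of its being dominated in the Nash condition) transfers to infeasibility at the earlier time $\theta_{j-1}$ used in $f''$ via the purely additive dependence $\ell_t^{P}(\theta) = \theta + \tau_P$ on an unused parallel link.
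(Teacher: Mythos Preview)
Your plan matches the paper's proof in its essentials: the affine update $T' = (1-\nu_e/u)\,T + (\nu_e/u)(D-\tau_e)$ with slope in $[0,1)$, and an induction showing that the running endpoint in $f''$ never overtakes the corresponding block endpoint in $f'$. The paper argues exactly this, phrased as the invariant $T \le \underline{\theta}_e$ at the start of each call to \texttt{fillEdge}$(e,T)$ for $e\in S$.

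There is, however, a genuine gap in your handling of edges that carried no flow in $f'$ (the set $U$ in the algorithm). You write that ``infeasibility of a skipped path in $f$ \ldots\ transfers to infeasibility at the earlier time $\theta_{j-1}$ used in $f''$''. This is the wrong direction: from $\underline{\theta}_e + \tau_{e'} \ge D$ and $\theta_{j-1} \le \underline{\theta}_e$ you cannot conclude $\theta_{j-1} + \tau_{e'} \ge D$. In fact such an edge $e'\in U$ may very well be feasible at the earlier time $\theta_{j-1}$ and therefore receive a nontrivial block in $f''$. What you need to show is not that $e'$ stays infeasible, but that even when it is filled the invariant survives. The paper does this in one line: the Nash property of $f'$ gives $\tau_{e'} \ge D - \underline{\theta}_e$, hence $D-\tau_{e'} \le \underline{\theta}_e$, and plugging into the affine update together with $T \le \underline{\theta}_e$ yields
\[
T' \;=\; T\Bigl(1-\tfrac{\nu_{e'}}{u}\Bigr) + (D-\tau_{e'})\tfrac{\nu_{e'}}{u}
\;\le\; \underline{\theta}_e\Bigl(1-\tfrac{\nu_{e'}}{u}\Bigr) + \underline{\theta}_e\,\tfrac{\nu_{e'}}{u}
\;=\; \underline{\theta}_e.
\]
Once you replace the ``transfer of infeasibility'' claim by this computation, your induction goes through and the rest of your outline is correct. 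Note also that the verification of the layered condition for paths skipped \emph{by $f''$} is immediate from the \texttt{fillEdge} test $\ell_t^e(T)\ge D$ and needs no appeal to the Nash property of $f$.
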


\begin{proof}
We will start the proof of the correctness of the algorithm with a proof by induction over the while-loop. Let $e$ be the edge with the last inflow into it that arrives before the deadline, i.e., $e=\argmin_{e \in S} \{ \sup \{ \theta: f_e^{\prime+}(\theta)>0, \ell_t^e(\theta) < D \} \}$, and define $\underline{\theta}_e=\min\{\theta:f_e^{\prime+}(\theta)>0\}$ and $\bar{\theta}_e=\sup\{\theta:f_e^{\prime+}(\theta)>0, \ell_t^e(\theta) < D\}$. Note that $f_e^{\prime+}(\theta) = u$ for $\theta \in [\underline{\theta}_e,\bar{\theta}_e)$. Consider an iteration of the while-loop and define $\underline{T}$ as the value of $T$ at the start of this iteration and $\bar{T}$ as the value of $T$ at the end of this iteration. We will prove that $\bar{T}\leq \bar{\theta}_e$.

By the induction hypothesis, we can assume that $\underline{T}\leq \underline{\theta}_e$. Recall that for all $e'\in V$, we have that $\tau_{e'}\geq D-\underline{\theta}_e$. If not, $f'$ was not a layered Nash flow as particle $\underline{\theta}_e$ could improve by switching to $e'$. If there is an edge $e'\in V$ with $\ell_t^{e'}(\underline{T})<D$, then $\underline{T}$ gets updated to $T'=\underline{T}+(D-\underline{T}-\tau_{e'}) \nu_{e'} / u = \underline{T}(1-\nu_{e'}/u)+(D-\tau_{e'}) \nu_{e'} / u \leq \underline{\theta}_e(1-\nu_{e'}/u)+\underline{\theta}_e \nu_{e'}/u=\underline{\theta}_e$. Similarly, $T' \leq \underline{\theta}_e$ for all later edges $e'\in V$ with $\ell_t^{e'}(\underline{T})<D$.

This implies that at the start of \texttt{fillEdge}$(e,T)$, $T \leq \underline{\theta}_e$ and thus we have that $\bar{T}$ at the end of the while-loop is at most $T+(D-T-\tau_e) \nu_e / u\leq \underline{\theta}_e+(D-\underline{\theta}_e-\tau_e) \nu_e / u=\bar{\theta}_e$, as the inflow is positive only in the block on the full edge. The proof by induction on the while-loop is complete.

\begin{algorithm}[b!]
  \SetAlgoLined\DontPrintSemicolon
  \SetKwFunction{algo}{createLayeredFlow}\SetKwFunction{proc}{fillEdge}
  \SetKwProg{myalg}{Algorithm}{}{}
  \myalg{\algo{f'}}{
 \nl $T \gets 0$\;
 \nl $S \gets \{e \mid \exists \; \theta: f_e^{\prime +}(\theta) > 0 \}$\;
 \nl $U \gets \{e \in E \setminus S\}$\;
 \nl $V \gets \emptyset$\;
 \nl \While{$S \neq \emptyset$}{
  \nl $e \gets \argmin_{e \in S} \{ \sup \{ \theta: f_e^{\prime +}(\theta)>0\} \}$\;
  \nl $V \gets \{e' \in U \mid c_{e'} < c_e \}$\;
  \nl \For{$e' \in V$ in non-descending order of edge costs}{
    \nl \proc{e',T}\;
    \nl $U \gets U \setminus \{e'\}$\;
  }
  \nl \proc{e,T}\;
  \nl $S \gets S \setminus \{e\}$\;
 }
 \nl \For{$e' \in U$ in non-descending order of edge costs}{
    \nl \proc{e',T}\;
    \nl $U \gets U \setminus \{e'\}$\;
  }
  \nl $f_e^{\prime \prime +}(\theta) \gets u \text{ for }\theta \in [T,\infty) \text{ for some arbitrarily chosen } e\in E $\;
  \nl \Return{$f''$}\;
 }
 \BlankLine
 \BlankLine
 \setcounter{AlgoLine}{0}
  \SetKwProg{myproc}{Subroutine}{}{}
  \myproc{\proc{e,T}}{
 \nl \eIf{ $\ell_t^e(T) < D$}{
 \nl $T' \gets T + (D-T-\tau_e) \nu_e / u$\;
 \nl $f_e^{\prime \prime +}(\theta) \gets u \text{ for } \theta \in [T,T')$\;
 \nl $f_e^{\prime \prime +}(\theta) \gets 0 \text{ for } \theta \in [0,T) \cup [T',\infty)$\;
 \nl $T \gets T'$\;
 }{\nl $f_e^{\prime \prime +}(\theta) \gets 0 \text{ for } \theta \in [0,\infty)$\;}
 }
 \caption{Algorithm for \cref{clm:TurnIntoLayered}.}\label{alg:Monotonicity}
\end{algorithm}

After the while loop there might be some edges left in $U$ that did not receive flow in $f$. For an edge $e \in U$ we have that if $\ell_t^e(T)<D$, the value of $T$ remains at most $\bar{\theta}_{f'}$, similar to the previously considered edges in $U$.

To prove that the resulting flow $f''$ is a Nash flow, consider a time $\theta$ and an edge $e$ such that $f_e^{\prime\prime +}(\theta) > 0$. Note that by the iterative procedure that we applied on $f'$, the sequence of edges that are selected in line 6 of the algorithm are in non-decreasing order of edge costs. If there are edges that did not receive flow in $f'$ with a lower cost than $c_e$, these edges are considered in the algorithm before edge $e$. Therefore by the \texttt{fillEdge} subroutine all edges with lower costs than $c_e$ are either full or too long. Hence, edge $e$ was indeed the cheapest edge that the particle at time $\theta$ could take such that it arrives at the sink before the deadline. The fact that $f''$ is a layered Nash flow follows from the fact that the \texttt{fillEdge} subroutine sends inflow into each edge such that the last particle arrives exactly before the deadline.

We conclude that $f''$ is a layered Nash flow such that $\bar{\theta}_{f''} \leq \bar{\theta}_{f'}$. Therefore, the proof of \cref{clm:TurnIntoLayered} is complete.
\end{proof}


\section{Price of Anarchy}\label{sec:PoA}

We use two different measures to quantify the quality of a flow over time $f$. First, we consider the throughput objective under which we want to maximize the amount of flow that can reach the sink before the given deadline $D$. Recall that for a flow $f$, $M_f$ denotes the amount of flow that arrives at the sink before the deadline. Let $M^*= \max_f M_f$ be the maximum amount of flow that arrives at the sink before the deadline over all possible flows over time. We denote by $\mathcal{N}$ the set of Nash flows of a given instance. The \emph{throughput-price of anarchy} of a given instance is defined by
\[t\text{-}PoA=\max_{f\in \mathcal{N}}\frac{M^*}{M_f} \; .\]

Second, we consider the makespan objective. For the given deadline $D$, let $M= \min_{f \in \mathcal{N}} M_f$ be the smallest amount of flow that arrives in time over all Nash flows. We are interested in how fast we could possibly send $M$ to the sink. We define $D^*(M)$ as the earliest point in time for which a flow over time $f$ exists such that $M$ units of flow arrive at the sink before $D^*(M)$. The \emph{makespan-price of anarchy} of a given instance is defined by
\[m\text{-}PoA=\frac{D}{D^*(M)} \; .\]

\subsection{General networks}

For general networks, both prices of anarchy turn out to be unbounded.

\begin{theorem}\label{thm:unb}
Even in series-parallel networks, the throughput-price of anarchy and the makespan-price of anarchy are unbounded.
\end{theorem}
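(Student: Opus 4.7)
The plan is to construct a parameterised family of series-parallel instances on which the worst-case Nash throughput is $O(D)$ while the optimum throughput grows without bound with the parameter, which immediately yields unboundedness of both $t\text{-}\PoA$ and $m\text{-}\PoA$. The guiding design choice is to exploit the series-parallel (but non-parallel-path) structure by hiding a wide bundle of high-capacity continuations behind a shared access edge on the ``expensive'' branch, and to tune transit times so that the Nash never discovers this branch.

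Concretely I envisage placing between $s$ and $t$ a cheap attractor edge $e_0$ of unit capacity, cost $0$ and transit~$0$, in parallel with a bypass route $s \to v \to t$ whose access edge from $s$ to the hub $v$ has strictly positive cost and whose $v$-to-$t$ portion is realised as a bundle of $m$ parallel continuations with large capacity~$K$ and common transit time~$\tau$. By the definition of Nash flow together with \cref{lem:mono}, every Nash flow must concentrate its inflow entirely on $e_0$ as long as $e_0$ is feasible, since any bypass path is strictly more expensive. If the inflow rate~$u$ and the transit time~$\tau$ are coupled via $\tau \ge D - D/u$, then the unique instant $\theta = D/u$ at which $e_0$ becomes infeasible is simultaneously the moment every bypass path becomes infeasible, because any particle switching to the bypass would arrive at time $\theta + \tau \ge D$. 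Consequently every Nash flow in the instance delivers throughput exactly $D$ through $e_0$ and nothing through the bypass. An optimum, indifferent to cost, instead loads the bypass from time~$0$; because of the shared access edge, a single inflow stream can be split behind~$v$ across all $m$ continuations and aggregate throughput of order $\Omega(m \cdot K \cdot (D-\tau))$ at~$t$. Scaling $m(D-\tau)$ to infinity while maintaining the Nash-trap inequality $\tau \ge D - D/u$ then drives the throughput-$\PoA$ to infinity. The makespan-$\PoA$ follows from the same instance by setting $M$ equal to the Nash throughput: the optimum delivers $M$ units in time $O(M/(mK))$, while the Nash requires the full horizon~$D$, so $D/D^\ast(M) \to \infty$.

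The main obstacle I foresee is the tight coupling between the parameters: $u$ must be small enough to keep the Nash trap intact (that is, $\tau$ must be chosen close to $D$, which forces $u$ to be bounded in terms of $D-\tau$), yet the optimum must be able to push enough inflow through the shared access edge to load all $m$ continuations near their capacity. This is precisely the point at which the series-parallel but not parallel-path structure is essential: the shared access edge funnels a single inflow stream to many high-capacity continuations at once, so the aggregate throughput at~$t$ is not bounded by the inflow-per-path arguments that underlie the constant $\PoA$ bounds for parallel-path networks in \cref{sec:PoA}. Showing that the coupling has a feasible solution for every desired $\PoA$ threshold is therefore the core calculation; once it is in place, the argument for both performance measures is parallel.
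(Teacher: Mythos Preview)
Your Nash analysis is fine, but the computation of the optimum is where the construction breaks. In this model the source emits flow at a \emph{fixed finite rate} $u$, so the total mass that can possibly reach $t$ before time $D$ is at most $uD$; in particular the throughput through your bypass is bounded by $u\cdot(D-\tau)$, not by $mK\cdot(D-\tau)$. The multiplicity $m$ and the capacity $K$ are irrelevant once $mK\ge u$: the continuations simply never queue, and the limiting resource is the inflow, not the bundle width. Now combine this with your own Nash-trap inequality $\tau\ge D-D/u$: it forces $D-\tau\le D/u$, hence the bypass contributes at most $u\cdot D/u=D$ to the optimum. Splitting optimally (rate $1$ on $e_0$, rate $u-1$ on the bypass) gives $M^{*}\le D+(u-1)(D-\tau)\le D\bigl(2-\tfrac{1}{u}\bigr)$, while $M_f=D$, so your instance never exceeds $t\text{-}\PoA<2$. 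The makespan side fails for the same reason: to deliver $M=D$ units the optimum still needs time at least $D/u$ just to inject the mass and then must wait the transit $\tau\ge D(1-1/u)$ on the bypass; a short calculation shows $D^{*}(M)\ge D\bigl(\tfrac{1}{u}+(1-\tfrac{1}{u})^{2}\bigr)$, and the resulting ratio is maximised at $u=2$ with value $4/3$. So the ``tight coupling'' you flagged as the main obstacle is not merely an obstacle---it makes the approach give a bounded $\PoA$, no matter how $m$, $K$, or $u$ are scaled.

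What is missing is a mechanism by which the Nash's bad early choices \emph{propagate} and degrade later options. A single hub cannot do this because, once the trap on $e_0$ closes, the damage is capped at one factor of roughly $2$. The paper's construction instead chains $k$ two-edge gadgets in series (each with a cheap slow upper edge and an expensive fast lower edge, followed by a unit-capacity bottleneck); the Nash, preferring cheap paths, repeatedly takes the slow option and thereby creates queues at every intermediate bottleneck, so only $2$ units arrive before the deadline $D=k+2$, whereas the optimum routes everything along the fast lower edges and delivers $k+2$ units. The amplification comes from the \emph{series} composition of many gadgets, not from fanning out behind a single shared edge.
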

\begin{proof}
First, we consider the throughput-PoA. We construct a sequence of instances depicted in \cref{fig:PoAInfinite} that for some $k\in \mathbb{N}$ consists of $k$ subnetworks connected in series. Each subnetwork consists of a two-edge parallel link network connected in series with a single edge. The transit time of every upper edge in the parallel composition is $1$, the transit time of every lower edge in the parallel composition is $0$, and the transit time of every single edge is $0$. The inflow rate is $u=1$ and the capacity of every edge is equal to $1$. The cost of every upper and single edge is equal to $0$, the cost of the $i$-th lower edge is equal to $2^{k-i}$. We name paths such that $\sum_{e\in P_1}c_e<\sum_{e\in P_2}c_e<\ldots<\sum_{e\in P_{2^k}}c_e$. In particular, $P_1$ consists of the path containing all upper edges and $P_{2^k}$ contains all lower edges. We choose $D=k+2$.

Consider the layered flow over time $f$ defined by $f_{P_i}^+(\theta) = 1$ for $\theta \in \left[\frac{i-1}{2^{k-1}},\frac{i}{2^{k-1}}\right)$ and $0$ otherwise for all $i=1,\ldots 2^k$. For all $\theta\in[2,\infty)$, we allocate particles arbitrarily, as they will not arrive before $D$. Notice that the inflow into the $i$-th subnetwork is positive in the interval $[i-1,i+1)$, and the particles in $[i-1,i)$ take the upper edge and the particles in $[i,i+1)$ take the lower edge implying a queuing time of $1$ on the single edge. This implies that the last particle of every path arrives at the sink just before time point $D$. One can prove that $f$ is a Nash flow with $M_f = 2$.

The maximum amount of flow that can be sent to the sink with a deadline $D=k+2$ is equal to $k+2$ along $P_{2^k}$. This yields a throughput-price of anarchy of at least $1+k/2$. Regarding the makespan-price of anarchy, observe that an amount of flow of $M_f=2$ units can be sent over the network in $2$ time units. The constructed Nash flow yields a makespan-price of anarchy of at least $1+k/2$, which shows they are both unbounded.
\end{proof}

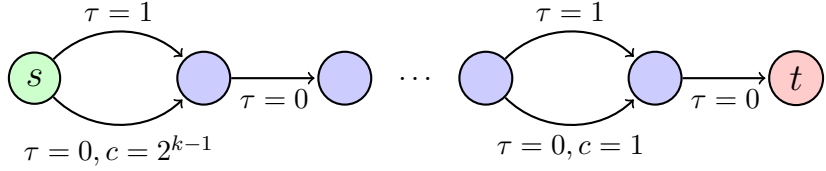
\begin{figure}[t]
\centering
\begin{tikzpicture}[->,shorten >=1pt,auto,node distance=2cm,
  thick,main node/.style={circle,fill=blue!20,draw,minimum size=20pt,font=\sffamily\Large\bfseries},source node/.style={circle,fill=green!20,draw,minimum size=15pt,font=\sffamily\Large\bfseries},dest node/.style={circle,fill=red!20,draw,minimum size=15pt,font=\sffamily\Large\bfseries},scale=0.75]
\node[source node] (1) at (0,0) {$s$};
\node[main node] (2) at (3,0) {};
\node[main node] (3) at (5.5,0) {};
\node[main node] (4) at (8,0) {};
\node[main node] (5) at (11,0) {};
\node[dest node] (6) at (13.5,0) {$t$};
\node  at (6.75,0) {$\ldots$};
\draw (1) to [bend left=45] node[above] {$\tau=1$} (2);
\draw (1) to [bend right=45] node[below] {$\tau=0, c=2^{k-1}$} (2);
\draw (2) to node[below] {$\tau=0$} (3);
\draw (4) to [bend left=45] node[above] {$\tau=1$} (5);
\draw (4) to [bend right=45] node[below] {$\tau=0, c=1$} (5);
\draw (5) to node[below] {$\tau=0$} (6);
\end{tikzpicture}
\caption{An example showing unbounded prices of anarchy in series-parallel network.}
\label{fig:PoAInfinite}
\end{figure}

\subsection{Parallel path networks}

By the previous example, a natural candidate for a graph class with a finite price of anarchy bound is parallel path networks. In this subsection we restrict our attention to these graphs. Recall that by \cite{correa2019price} it is sufficient to prove the results for parallel link networks.

\subsubsection{Throughput-price of anarchy.}

First, we provide an asymptotically tight bound of $\e/(\e-1)$ for parallel graphs with no transit times. Second, we give a tight bound of $2$ for parallel graphs with arbitrary transit times.

\begin{theorem}\label{thm:0}
In a parallel path network with $\tau_e=0$ for all $e\in E$ the throughput price of anarchy is at most $\e/(\e-1)$.
This bound is asymptotically tight.
\end{theorem}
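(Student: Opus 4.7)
The plan is to reduce the claim to a short calculation on one specific layered Nash flow. By \cite[Lemma 7]{correa2019price} it suffices to prove the bound in parallel link networks, and by \cref{lem:par} the ratio $M^*/M_f$ is maximized (i.e.\ $M_f$ minimized) on a layered Nash flow, so only that flow needs to be analyzed. Assume WLOG $\nu_e \le u$ (truncate otherwise), order the edges by cost $c_{e_1}\le\cdots\le c_{e_n}$, and set $x_i := \nu_{e_i}/u \in [0,1]$. Because $\tau_e=0$, during phase $i$ the queue on $e_i$ grows at rate $u-\nu_{e_i}$, so the Nash condition $\ell_t^{e_i}(\theta_i)=D$ reads $\theta_i + (u-\nu_{e_i})(\theta_i-\theta_{i-1})/\nu_{e_i}=D$ and rearranges to the recursion $D-\theta_i = (D-\theta_{i-1})(1-x_i)$. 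Unrolling and using $M_f = u\theta_n$ yields
\[
M_f \;=\; uD\Bigl(1 - \prod_{i=1}^n (1-x_i)\Bigr).
\]

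Next I would compute $M^*$. Since $\tau_e=0$, each edge $e$ delivers to $t$ at rate at most $\nu_e$, giving $M^* \le D\sum_e \nu_e$ and of course $M^*\le uD$; both bounds are attained simultaneously by splitting the constant inflow proportionally to capacities up to saturation, so $M^* = uD\cdot\min\{1,s\}$ with $s := \sum_i x_i$. The ratio to bound is therefore
\[
\frac{M^*}{M_f} \;=\; \frac{\min\{1,s\}}{1 - \prod_{i=1}^n(1-x_i)}.
\]
Apply $1-x_i \le \e^{-x_i}$ to get $\prod_i(1-x_i) \le \e^{-s}$. If $s\ge 1$ then $\prod \le 1/\e$ and $M^*/M_f \le 1/(1-1/\e) = \e/(\e-1)$. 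If $s<1$, it suffices to show $1-\e^{-s} \ge (1-1/\e)s$ on $[0,1]$; this holds because $g(s):=1-\e^{-s}$ is concave with $g(0)=0$ and $g(1)=1-1/\e$, so $g$ lies above its chord $(1-1/\e)s$ throughout $[0,1]$.

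For asymptotic tightness, I would take $n$ parallel edges of equal capacity $u/n$ with strictly increasing but negligible costs and $\tau_e=0$, so $s=1$. The layered Nash flow then gives $M_f = uD\bigl(1-(1-1/n)^n\bigr)$ while $M^* = uD$, hence $M^*/M_f = 1/(1-(1-1/n)^n) \to \e/(\e-1)$ as $n\to\infty$. The main obstacle is the case split on whether $s\ge 1$ and the concavity inequality $1-\e^{-s}\ge(1-1/\e)s$ on $[0,1]$; both are elementary once the closed form for $M_f$ is in hand, so the heart of the proof really is extracting that product formula from the Nash recursion.
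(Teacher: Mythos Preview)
Your proof is correct and takes a cleaner route than the paper's. Both proofs start identically: reduce to parallel link networks, restrict to layered Nash flows via \cref{lem:par}, and derive the closed form $M_f = uD\bigl(1-\prod_i(1-x_i)\bigr)$ from the recursion (the paper's \cref{cla:the}). They diverge at the analysis of the ratio $\min\{1,s\}/\bigl(1-\prod_i(1-x_i)\bigr)$. The paper proceeds by optimization: it computes partial derivatives to argue the worst case occurs at $\sum_i x_i = 1$ (their \cref{claim:MaxPoA}), then maximizes $\prod_i(1-x_i)$ under this constraint to find all $x_i = 1/k$, and finally takes $k\to\infty$. You instead apply $1-x_i\le\e^{-x_i}$ to collapse the product to $\e^{-s}$ and finish with the concavity inequality $1-\e^{-s}\ge(1-1/\e)s$ on $[0,1]$. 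Your argument is shorter and sidesteps the somewhat delicate derivative computation in \cref{claim:MaxPoA}; the paper's approach, on the other hand, explicitly identifies the extremal configuration $\nu_i=1/k$, which immediately yields the tight example and whose optimization structure is reused verbatim in the proofs of \cref{thm:PoAis2} and \cref{PoA2Parallel}.
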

\begin{proof}

By Lemma \ref{lem:par}, we can restrict attention to layered Nash flows over time. So assume that $f$ is a layered Nash flow over time in a parallel link network with edges $e_1, \ldots, e_k$. Let $0=\theta_0\leq \theta_1\leq \ldots \leq\theta_{k}\leq D$ be as specified in \cref{def_layeredNash}.
Without loss of generality assume that $u=1$ and $\nu_e\leq 1$ for all $e\in E$.
Given that the last particle on an edge 
arrives exactly at the deadline, we have that for all $i=1,\ldots,k,$
\begin{equation*}\label{eq:recu}
\theta_{i-1}+\frac{\theta_i-\theta_{i-1}}{\nu_i}=D\;.
\end{equation*}
Rewriting yields 
\begin{equation} \label{eq:rec}
\theta_i=(1-\nu_i)\cdot \theta_{i-1}+\nu_i \cdot D\;.
\end{equation}

\begin{claimex}\label{cla:the}
Solving \cref{eq:rec} with $\theta_0 = 0$ yields for all $i=1,\ldots,k,$
\[\theta_i=D\cdot\left(1-\prod_{j=1}^i(1-\nu_j)\right)\;.\]
\end{claimex}

\begin{proof}
We use a proof by induction. Observe that for $i=1$, we have that $\theta_1=D\cdot \nu_1$. Assume the statement is true for edge $i-1$, then for edge $i$ we have that
\begin{align*}
\theta_i&=(1-\nu_i)\cdot \theta_{i-1}+\nu_i \cdot D\\
&= (1-\nu_i)\cdot D\cdot\left(1-\prod_{j=1}^{i-1}(1-\nu_j)\right)+\nu_i \cdot D\\
&=D\cdot\left(1-\prod_{j=1}^{i}(1-\nu_j)\right)\;,
\end{align*}
where the second equality follows from the induction hypothesis.
\end{proof}

Observe that the socially optimal flow sends $\min\{1,\sum_{i=1}^k\nu_i\}\cdot D$ units of flow before the deadline and the Nash flow over time sends $\theta_k$ units of flow before the deadline. Therefore, $t\text{-}POA = \min\{1,\sum_{i=1}^k\nu_i\}\cdot D / \theta_k$. In order to maximize its value we can assume that $\nu_i<1$ for all $i=1,\ldots,k$, as otherwise $t\text{-}PoA=1$. 

We now argue that we obtain the worst case price of anarchy for $\sum_{i=1}^k\nu_i=1$.

\begin{claimex}\label{claim:MaxPoA}
$t\text{-}PoA = \min\{1,\sum_{i=1}^k\nu_i\}\cdot D / \theta_k$ is maximized if $\sum_{i=1}^k\nu_i=1$.
\end{claimex}

\begin{proof}
If $\sum_{i=1}^k\nu_i>1$, observe that by decreasing the capacity of some edge $i$, the amount of flow that reaches $t$ in the Nash flow over time decreases as $\frac{\partial\theta_k}{\partial\nu_i}=D\cdot\prod_{j\neq i}(1-\nu_j)>0$ for all $i=1,\ldots,k$ and the optimal flow stays constant. This implies we increase the throughput-$PoA$. 

In the other case, if $\sum_{i=1}^k\nu_i<1$, observe that for each edge $i$
\begin{align}
\frac{\partial t\text{-}PoA}{\partial \nu_i}&=\frac{\partial}{\partial \nu_i} \frac{\sum_{j=1}^k\nu_j}{1-\prod_{j=1}^k(1-\nu_j)}\notag\\ 
&=\frac{1-\left(\prod_{j=1}^k(1-\nu_j)+\sum_{j=1}^k\nu_j\cdot \prod_{j\neq i}(1-\nu_j)\right)}{\left(1-\prod_{j=1}^k(1-\nu_j)\right)^2} \label{eq:partialderivativePoA} \\ 
&=\frac{1-\left(1+\sum_{j\neq i}\nu_j\right)\cdot \prod_{j\neq i}(1-\nu_j)}{\left(1-\prod_{j=1}^k(1-\nu_j)\right)^2}\;. \notag
\end{align}
In order to prove that $\frac{\partial t\text{-}PoA}{\partial \nu_i}>0$, it remains to show that $\left(1+\sum_{j\neq i}\nu_j\right)\cdot \prod_{j\neq i}(1-\nu_j)<1$. Since for all $\ell\neq i$, we have that
\begin{align*}
\frac{\partial}{\partial \nu_{\ell}} \left(1+\sum_{j\neq i}\nu_j\right)\cdot \prod_{j\neq i}(1-\nu_j)&=\prod_{j\neq i}(1-\nu_j)-\left(1+\sum_{j\neq i}\nu_j\right)\cdot \prod_{j\neq i,\ell}(1-\nu_j)\\
&=-\left(2\nu_{\ell}+\sum_{j\neq i,\ell}\nu_j\right)\cdot \prod_{j\neq i,\ell}(1-\nu_j)<0\;,
\end{align*}
we know that the maximum of $\left(1+\sum_{j\neq i}\nu_j\right)\cdot \prod_{j\neq i}(1-\nu_j)$ is attained if and only if $\nu_{\ell}=0$ for all $\ell\neq i$. However, since $\nu_e>0$ for all $e\in E$, this implies that $\left(1+\sum_{j\neq i}\nu_j\right)\cdot \prod_{j\neq i}(1-\nu_j)<1$ and hence $\frac{\partial t\text{-}PoA}{\partial \nu_i}>0$. This implies we increase the throughput-$PoA$ by increasing the capacity of edge $i$.
\end{proof}

Using the expression for $\theta_k$ from \cref{cla:the} and the result of \cref{claim:MaxPoA}, we want to solve the following optimization problem in order to maximize the throughput-price of anarchy:
\begin{align*}
\max\: &\prod_{j=1}^k(1-\nu_j)\\
\text{s.t.}&\sum_{i=1}^k\nu_i=1\;,\\
&\nu_i\geq 0\text{ for all }i=1,\ldots,k\;.
\end{align*}
This problem is solved by setting $1-\nu_1=\ldots=1-\nu_k$ and hence 
$\nu_1=\ldots=\nu_k=1/k$. This implies that
\[t\text{-}PoA\leq\frac{1}{1-\left(\frac{k-1}{k}\right)^k}\leq  \frac{\e}{\e-1}\;,\]
where the last inequality follows as $\frac{1}{1-\left(\frac{k-1}{k}\right)^k}$ increases with $k$. The fact that this bound is tight follows from \cref{ex:tight}, which completes the proof of \cref{thm:0}.
\end{proof}

\begin{example}\label{ex:tight}
\cref{fig:ee-1istight} with $\tau_e=0$ for all $e\in E$, $u = D=1$, and $k \rightarrow \infty$ gives an asymptotically tight example for \cref{thm:0}.\qedhere
\end{example}

\begin{figure}[h]
\centering
\begin{tikzpicture}[->,shorten >=1pt,auto,node distance=2cm,thick,main node/.style={circle,fill=blue!20,draw,minimum size=20pt,font=\sffamily\Large\bfseries},source node/.style={circle,fill=green!20,draw,minimum size=15pt,font=\sffamily\Large\bfseries},dest node/.style={circle,fill=red!20,draw,minimum size=15pt,font=\sffamily\Large\bfseries},scale=0.75]
\node[source node] (1) at (0,0) {$s$};
\node[dest node] (2) at (4,0) {$t$};
\draw (1) to [bend left=45] node[above] {$\nu_1=1/k$} (2);
\draw (1) to node[above] {$\nu_2=1/k$} (2);
\node at (2,-0.3) {$\vdots$};
\draw (1) to [bend right=45] node[below] {$\nu_k=1/k$} (2);
\end{tikzpicture}
\caption{The tight example for \cref{thm:0,PoA2Parallel} with a throughput-$PoA$ and makespan-$PoA$ arbitrarily close to $\e/(\e-1)$.}
\label{fig:ee-1istight}
\end{figure}
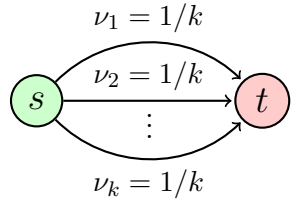

Our next result determines the throughput-price of anarchy in parallel path networks with arbitrary transit times.

\begin{theorem}\label{thm:PoAis2}
In a parallel path network the throughput-price of anarchy is at most $2$. 
This bound is tight.
\end{theorem}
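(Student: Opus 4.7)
The plan is to mimic the structure of the proof of \cref{thm:0}, now adapted to allow arbitrary transit times. By \cref{lem:par} and the reduction from parallel path to parallel link networks from \cite[Lemma~7]{correa2019price}, it suffices to consider a layered Nash flow $f$ in a parallel link network with edges $e_1,\dots,e_k$, ordered by non-decreasing cost, with $u=1$ and $\nu_i\leq 1$. Let $0=\theta_0\leq\theta_1\leq\dots\leq\theta_k\leq D$ be the breakpoints from \cref{def_layeredNash} and set $\bar\tau_i:=\max\{D-\tau_i,\,0\}$. Repeating the derivation of \cref{eq:rec} but keeping $\tau_i$ in the deadline condition for the last particle (queue length on $e_i$ at $\theta_i$ is $(\theta_i-\theta_{i-1})(1-\nu_i)$, and its arrival time $\theta_i+(\theta_i-\theta_{i-1})(1-\nu_i)/\nu_i+\tau_i$ must equal $D$) gives the recurrence
\[\theta_i-\theta_{i-1}\;=\;\nu_i\,\max\{\bar\tau_i-\theta_{i-1},\,0\},\]
so that $e_i$ receives Nash inflow exactly when $\theta_{i-1}<\bar\tau_i$, and $M_f=\theta_k$.

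The core of the proof will be the inequality $M^*\leq 2\theta_k$, which I would establish by splitting the optimum according to departure time at the source. The inflow-rate constraint $u=1$ immediately bounds by $\theta_k$ the mass sent during $[0,\theta_k]$. For departures in $[\theta_k,D]$, a particle sent on $e_i$ at time $\theta$ reaches $t$ before $D$ only if $\tau_i\leq D-\theta$; integrating the per-edge capacity $\nu_i$ over this window in each edge yields
\[M^*\;\leq\;\theta_k\;+\;\sum_{i=1}^k \nu_i\,(\bar\tau_i-\theta_k)^+.\]

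The decisive step, which I expect to be the main obstacle, is to show that the residual sum is at most $\theta_k$. The naive bound $\sum_i\nu_i(\bar\tau_i-\theta_k)^+\leq \theta_k\sum_i\nu_i$ is useless because $\sum_i\nu_i$ is not bounded in terms of $u$. Instead the idea is a pairing argument against the Nash layers: if $\bar\tau_i>\theta_k$ then $\theta_{i-1}\leq\theta_k<\bar\tau_i$, so $e_i$ is used in $f$, and the recurrence gives $\theta_i-\theta_{i-1}=\nu_i(\bar\tau_i-\theta_{i-1})\geq\nu_i(\bar\tau_i-\theta_k)$. Summing over such edges (the others contribute $0$) telescopes to $\sum_i(\theta_i-\theta_{i-1})=\theta_k$, and plugging this into the previous display completes the upper bound $M^*\leq 2\theta_k$.

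For the matching lower bound, I would display a two-edge parallel link instance with $D=u=1$ and parameter $n\in\mathbb N$: edge $e_1$ with $\nu_1=1/n$, $\tau_1=0$, $c_1=0$, and edge $e_2$ with $\nu_2=1$, $\tau_2=1-1/n$, $c_2=1$. Since $e_1$ is cheaper, the layered Nash flow uses it first and obtains $\theta_1=1/n$, after which $\bar\tau_2=1/n=\theta_1$ precludes $e_2$, so $M_f=1/n$. An optimum flow instead runs $e_2$ at rate $1$ during $[0,1/n]$ and $e_1$ at rate $1/n$ during $[1/n,1]$, achieving $M^*=(1/n)(2-1/n)$, and hence $M^*/M_f=2-1/n\to 2$ as $n\to\infty$.
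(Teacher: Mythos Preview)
Your proof is correct and takes a genuinely different, more direct route than the paper's. The paper proceeds by parametrizing the worst case: it writes the throughput-$PoA$ explicitly as a function of the $\nu_i$ and $\tau_i$, then uses calculus (partial derivatives with respect to $\tau_i$ and $\nu_i$) to argue that the maximum is attained when $\tau_1=\dots=\tau_{k-1}=0$, $\tau_k=D-\theta_{k-1}$, $\sum_i\nu_i=1$, and all $\nu_i$ are equal; finally it evaluates the resulting one-parameter expression and applies L'H\^opital to obtain the limit $2$. Your argument avoids all of this optimization. Splitting $M^*$ by departure time and bounding the late-departure mass on each edge by the outflow-rate inequality $\nu_i(\bar\tau_i-\theta_k)^+$ is the key observation the paper does not make; the subsequent pairing against the Nash layers via $\theta_i-\theta_{i-1}\ge\nu_i(\bar\tau_i-\theta_k)^+$ and the telescoping sum is clean and fully rigorous. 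One small wording issue: when you say ``integrating the per-edge capacity $\nu_i$ over this window'', the bound is really on the \emph{outflow} of $e_i$ over $[\theta_k+\tau_i,D)$ rather than on the inflow window $[\theta_k,\bar\tau_i]$, but the length of both intervals is $(\bar\tau_i-\theta_k)^+$, so the estimate is unaffected. Your tight example is essentially the paper's (\cref{example:PoA2Tight}) with $\varepsilon=1/n$ and $\nu_2=1$ instead of $1-\varepsilon$; either variant works. The payoff of your approach is a proof that is both shorter and more robust (no reliance on characterizing the extremal instance), whereas the paper's approach has the side benefit of identifying the worst-case structure explicitly.
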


\begin{proof}
As before, assume $f$ is a layered Nash flow over time in a parallel link network with edges $e_1, \ldots, e_k$.
Without loss of generality assume that $u=1$ and $\nu_e\leq 1$ for all $e\in E$.
We will derive some properties of the network so that the price of anarchy is maximized.

Observe that if $\theta_{i-1}=\theta_i$ for some edge $i$, we have that $\tau_i\geq D-\theta_{i-1}$. Because no positive flow mass is sent along edge $i$ if $\theta_{i-1}=\theta_i$ and we want to maximize the throughput-$PoA$, we can assume that $\tau_i=D-\theta_{i-1}$, as a larger transit time in this case will only decrease the optimal flow, but not affect the Nash flow over time. Thus $\tau_i \leq D - \theta_{i-1}$ for all $i$, which implies that
\begin{equation*}
    \theta_{i-1}+\frac{\theta_i-\theta_{i-1}}{\nu_i}+\tau_i=D\;. 
\end{equation*}
Rewriting yields
\begin{equation}
\theta_i=(1-\nu_i)\cdot \theta_{i-1}+\nu_i \cdot (D-\tau_i) \; . \label{eq:recursive2}
\end{equation}
\begin{claimex}\label{cla:tht}
Solving \cref{eq:recursive2} with $\theta_0 = 0$ yields for all $i=1,\ldots,k,$
\begin{equation*}
\theta_i
=D\cdot \left(1-\prod_{j=1}^{i}(1-\nu_j)\right) - \sum_{j=1}^i \left( \tau_j\cdot\nu_j\cdot\prod_{\ell=j+1}^i(1-\nu_{\ell}) \right)\;.
\end{equation*}
\end{claimex}
\begin{proof}
We use a proof by induction. Observe that for $i=1$, we have that $\theta_1=(D-\tau_1)\cdot\nu_1$. Assume the statement is true for edge $i-1$, then for edge $i$ we have that
\begin{align*}
\theta_i&=(1-\nu_i)\cdot \theta_{i-1}+\nu_i \cdot (D-\tau_i)\\
&= (1-\nu_i)\cdot \left(D\cdot\left(1-\prod_{j=1}^{i-1}(1-\nu_j)\right)-\sum_{j=1}^{i-1} \left( \tau_j\cdot\nu_j\cdot\prod_{\ell=j+1}^{i-1}(1-\nu_{\ell})\right) \right)+\nu_i \cdot (D-\tau_i)\\
&=D\cdot\left(1-\prod_{j=1}^{i}(1-\nu_j)\right)-\sum_{j=1}^i \left( \tau_j\cdot\nu_j\cdot\prod_{\ell=j+1}^i(1-\nu_{\ell}) \right)\;,
\end{align*}
where the second equality follows from the induction hypothesis.
\end{proof}

Similar to \cref{claim:MaxPoA} we can show that we can assume that $\sum_{i=1}^k\nu_i\leq 1$.

\begin{claimex}\label{claim:MaxPoA2}
$t\text{-}PoA$ is maximized if $\sum_{i=1}^k\nu_i\leq 1$.
\end{claimex}
\begin{proof}
Let $i$ denote the edge with largest transit time. Suppose the claim is not true, then by slightly reducing the capacity of the edge $i$, we do not change the optimal flow.
Now we investigate the effect of reducing the capacity of this edge on the values of the $\theta_i$ sequence in the Nash flow over time. 
Note that
$\frac{\partial \theta_i}{\partial \nu_i} = - \theta_{i+1} + (D-\tau_i) \geq 0$. Therefore, reducing $\nu_i$ changes the value of $\theta_i$ to $\theta_i' \leq \theta_i$. Note that due to the structure of the layered Nash flow over time, $\theta_j' = \theta_j$ for all $j < i$.

We now prove by induction that $\theta_j' \leq \theta_j$ for all $j > i$. Assume the statement is true up to some $\theta_j$ for a fixed $j$ (initially, it is true for $j=i$), so $\theta_j' \leq \theta_j$. If edge $j+1$ carries flow, then by the structure of the layered Nash flow over time we have that
\begin{equation}
    \theta_j' + \frac{\theta_{j+1}' - \theta_j'}{\nu_{j+1}} + \tau_{j+1} = D \; . \label{eq:recursive3}
\end{equation}
Rearranging terms yields $\theta_{j+1}' = (1 - \nu_{j+1}) \theta_j' + \nu_{j+1} (D - \tau_{j+1})$. Combining this with \cref{eq:recursive2} and $\theta_j' \leq \theta_j$ yields that $\theta_{j+1}' \leq \theta_{j+1}$.

On the other hand, if edge $j+1$ does not carry flow, we have $\theta'_{j+1} = \theta'_j$.
Thus, $\theta_{j+1}' = \theta'_j \leq \theta_j \leq \theta_{j+1}$.

Therefore, reducing the capacity of the edge with the longest transit time improves the quality of the Nash flow over time but not of the optimal flow, and hence the throughput-price of anarchy is maximized if $\sum_{i=1}^k \nu_i \leq 1$.
\end{proof}
From the above claims we conclude that the throughput-$PoA$ is equal to
\begin{align*}\frac{M^*}{M_f} &= \frac{\sum_{j=1}^k \nu_j (D-\tau_j)}{\theta_{k}}\\
&= \frac{\sum_{j=1}^k \nu_j (D-\tau_j)} {D\cdot\left(1-\prod_{j=1}^{k}(1-\nu_j)\right)-\sum_{j=1}^{k}\tau_j\cdot\nu_j\cdot\prod_{\ell=j+1}^{k}(1-\nu_{\ell})}\;.
\end{align*}
Taking the derivative with respect to $\tau_i$ for some $i \leq k$ we get
\begin{align*}
\frac{\partial t\text{-}PoA}{\partial \tau_i} &= \frac{-\nu_i\left(D\cdot\left(1-\prod_{j=1}^{k}(1-\nu_j)\right)-\sum_{j=1}^{k}\tau_j\cdot\nu_j\cdot\prod_{\ell=j+1}^{k}(1-\nu_{\ell})\right)}{\left(D\cdot\left(1-\prod_{j=1}^{k}(1-\nu_j)\right)-\sum_{j=1}^{k}\tau_j\cdot\nu_j\cdot\prod_{\ell=j+1}^{k}(1-\nu_{\ell})\right)^2}\\
&- \frac{\sum_{j=1}^k \nu_j (D-\tau_j) \left(-\nu_i \prod_{\ell=i+1}^{k}(1-\nu_\ell)\right)}{\left(D\cdot\left(1-\prod_{j=1}^{k}(1-\nu_j)\right)-\sum_{j=1}^{k}\tau_j\cdot\nu_j\cdot\prod_{\ell=j+1}^{k}(1-\nu_{\ell})\right)^2}\;.
\end{align*}
We observe that the numerator of the derivative is independent of $\tau_i$, so to achieve a maximum throughput-$PoA$, we either set $\tau_i=0$ or $\tau_i =D- \theta_{i-1}$ for all $i$. This implies that for an edge $i$ either $\tau_i = 0$ or the edge does not carry flow in the Nash flow over time. In particular, we can prove the following structure.

\begin{claimex}\label{clm:NoTau}
We can assume that $\tau_i = 0$ for all $i=1, \ldots, k-1$ and $\tau_k = D - \theta_{k-1}$.
\end{claimex}
\begin{proof}
We first argue that $\tau_1=0$. Suppose this is not the case. As $\theta_0=0$ the above discussion would imply that $\tau_1=D$, which means that both the optimal flow and the Nash flow cannot send a positive flow mass along edge $e_1$ and we can delete the edge without changing the throughput-$PoA$. So we can assume $\tau_1 = 0$.

Now we show that $\tau_2 = \tau_3 = \dots = \tau_{k-1}=0$, and $\tau_{k} = D - \theta_{k-1}$. First suppose that there are edges $i$, $i+1$ with $\tau_i\neq 0$ and $\tau_{i+1} = 0$. In this case, we change the order of edges $i$ and $i+1$ in the layered Nash flow over time and consider the order $(1,\dots,i-1, i+1, i, i+2, \dots, k)$. Since $\tau_i \neq 0$, edge $i$ does not carry any flow, and therefore $\theta_{i-1} = \theta_i$, which means that switching edges $i$ and $i+1$ does not change the amount of flow sent along edge $i+1$. Additionally, as $\tau_i = D-\theta_{i-1}$, and edge $i$ only starts to receive flow from time $\theta_{i+1}$ onwards in the new Nash flow, edge $i$ still does not receive any flow. Hence $\theta_{i+1}, \dots, \theta_{k}$ remain the same. We conclude that we can assume $\tau_1=\dots=\tau_{k'}=0$ and $\tau_{k'+1}=\dots=\tau_k=D-\theta_{k'}$ for some $k'$.

It remains to argue that we can assume that $k'+1=k$. Since these edges do not receive any flow in the Nash flow and have the same transit time, we can merge these edges into one edge. This changes neither the amount of flow sent in the Nash flow over time nor in the optimal solution.
\end{proof}

We continue by explicitly stating the throughput-$PoA$.
\begin{align*}
    t\text{-}PoA &= \frac{\sum_{j=1}^{k}(D-\tau_j)\cdot\nu_j }{\theta_k}=\frac{D\cdot\sum_{j=1}^{k-1}\nu_j+\theta_{k-1}\cdot\nu_k }{\theta_{k-1}}\\
    &=\frac{\sum_{j=1}^{k-1}\nu_j }{1-\prod_{j=1}^{k-1}(1-\nu_j)}+\nu_k \; ,
\end{align*}
where the second inequality follows from $\tau_1=\ldots=\tau_{k-1}=0$ and $\tau_k=D-\theta_{k-1}$ and the fact that $\theta_k=\theta_{k-1}$, and the third equality follows from the fact that $\theta_{k-1}$ is as defined in \cref{cla:the}.

With \cref{clm:NoTau} at our disposal, we can now show that we can strengthen \cref{claim:MaxPoA2} to the following.

\begin{claimex}\label{claim:MaxPoA2Equal}
$t\text{-}PoA$ is maximized if $\sum_{i=1}^k\nu_i = 1$.
\end{claimex}
\begin{proof}
    This follows from \cref{eq:partialderivativePoA} in the proof of \cref{thm:0} since the partial derivative with respect to $\nu_i$ is larger than $0$ for each edge $i$. As before, we will show the edge capacities are all equal. In the following, we assume that $\nu_k$ is already fixed and show that $\nu_1=\nu_2 = \dots = \nu_{k-1}$. As $\sum_{j=1}^{k-1}\nu_j = 1- \nu_k$, the numerator of the throughput-$PoA$ term is fixed. Thus, the throughput-$PoA$ is maximized for maximum $\prod_{j=1}^{k-1}(1-\nu_j)$. Following similar arguments as in the proof of \cref{thm:0}, we can conclude that $(1-\nu_i) = (1-\nu_j)$ for all $i,j$, i.e., that $\nu_1 = \nu_2 = \dots = \nu_{k-1}$.
\end{proof}

This implies that the throughput-price of anarchy can be written as
\[t\text{-}PoA=\frac{(k-1)\cdot\nu_1}{1-(1-\nu_1)^{k-1}}+1-(k-1)\cdot\nu_1 \; .\]
In order to maximize this expression, \cref{partialDervPoAnegative} shows that the partial derivative with respect to $\nu_1$ is negative. 

\begin{claimex}
$\frac{\partial}{\partial \nu_1}\left( \frac{(k-1)\cdot\nu_1}{1-(1-\nu_1)^{k-1}}+1-(k-1)\cdot\nu_1\right) < 0 \; .$
\label{partialDervPoAnegative}
\end{claimex}
\begin{proof}
Observe that
\begin{align*}
    \frac{\partial}{\partial \nu_1}& \left(\frac{(k-1)\cdot\nu_1}{1-(1-\nu_1)^{k-1}}+1-(k-1)\cdot\nu_1 \right)\\
    &= \frac{(k-1)(1-(1-\nu_1)^{k-1}) - (k-1)\nu_1 (k-1) (1-\nu_1)^{k-2}}{\left(1-(1-\nu_1)^{k-1}\right)^2} - (k-1)\\
    &= (k-1) \left(\frac{1-(1-\nu_1)^{k-1} - \nu_1 (k-1) (1-\nu_1)^{k-2}}{\left(1-(1-\nu_1)^{k-1}\right)^2} - 1\right)\;,
\end{align*}
i.e., it suffices to show that $1-(1-\nu_1)^{k-1} - \nu_1 (k-1) (1-\nu_1)^{k-2} < \left(1-(1-\nu_1)^{k-1}\right)^2$. Straightforward rewriting shows that this is equivalent to showing that
$1 - k\cdot\nu_1 < (1-\nu_1)^{k}$.
This inequality is also known as Bernoulli's inequality and is true for $k\geq 2$.
\end{proof}

It follows that the throughput-$PoA$ is maximized for $\nu_1 \rightarrow 0$. Using L'H\^{o}pital's rule gives $\lim_{\nu_1 \rightarrow 0} t\text{-}PoA = 1 + 1 = 2$. The fact that this bound is tight follows from \cref{example:PoA2Tight}, which completes the proof of \cref{thm:PoAis2}.
\end{proof}

\begin{example}\label{example:PoA2Tight}
The following example shows that the bound of \cref{thm:PoAis2} is tight. See \cref{fig:PoA2Tight}. Let $u=D=1$ and let the upper edge be cheaper than the lower edge.
The layered flow over time that first sends $\varepsilon$ units over the upper path is a Nash flow; all later particles will arrive after the deadline. The socially optimal flow sends $\varepsilon+\varepsilon\cdot(1-\varepsilon)$ units of flow, yielding a throughput-price of anarchy of
\[t\text{-}PoA\geq\frac{\varepsilon+\varepsilon\cdot(1-\varepsilon)}{\varepsilon}=2-\varepsilon\rightarrow 2 \text{ as }\varepsilon\rightarrow 0\;.\]
\end{example}

\begin{figure}[b]
\centering
\begin{tikzpicture}[->,shorten >=1pt,auto,node distance=2cm,
  thick,main node/.style={circle,fill=blue!20,draw,minimum size=20pt,font=\sffamily\Large\bfseries},source node/.style={circle,fill=green!20,draw,minimum size=15pt,font=\sffamily\Large\bfseries},dest node/.style={circle,fill=red!20,draw,minimum size=15pt,font=\sffamily\Large\bfseries},scale=0.75]
\node[source node] (1) at (0,0) {$s$};
\node[dest node] (2) at (4,0) {$t$};
\draw (1) to [bend left=45] node[above] {$\tau=0, \nu=\varepsilon$} (2);
\draw (1) to [bend right=45] node[below] {$\tau=1-\varepsilon, \nu=1-\varepsilon$} (2);
\end{tikzpicture}
\caption{The tight example for \cref{thm:PoAis2} with a throughput-$PoA$ arbitrarily close to 2 for general transit times.}
\label{fig:PoA2Tight}
\end{figure}
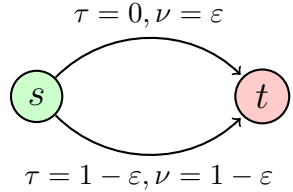

\subsubsection{Makespan-price of anarchy.}

Perhaps surprisingly, the makespan-$PoA$ behaves slightly different: its value is bounded by $\e/(\e-1)$ independent of transit time values.

\begin{theorem}\label{PoA2Parallel}
In a parallel path network the makespan-price of anarchy is at most $\e/(\e-1)$. This bound is tight.
\end{theorem}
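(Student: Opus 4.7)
The plan is to follow the template of \cref{thm:0,thm:PoAis2}: reduce to a layered Nash flow on a parallel link network (via \cite[Lemma~7]{correa2019price} and \cref{lem:par}), derive a closed-form upper bound on the makespan-$PoA$ in terms of the Nash parameters, and then use an extremality argument to identify the worst-case configuration. Normalise $u=1$ and $\nu_e \le 1$, order the edges by cost, and let $0 = \theta_0 \le \theta_1 \le \cdots \le \theta_k$ be the Nash breakpoints, so that $M_f = \theta_k$. For any $T \ge 0$, write $M^*(T)$ for the maximum amount of flow that can reach $t$ by time $T$ in this parallel link network at source inflow rate $u=1$. Then $D^* = \inf\{T : M^*(T) \ge M_f\}$ and the makespan-$PoA$ equals $D/D^*$.

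Two upper bounds on $M^*(T)$ are now easy. First, $M^*(T) \le T - \tau_{\min}$ where $\tau_{\min} = \min_i \tau_i$: every unit of delivered flow must enter the network by time $T-\tau_{\min}$, while the inflow rate is at most $1$. Second, $M^*(T) \le \sum_i \nu_i (T-\tau_i)^+$ by summing the per-edge capacity constraints. Summing the Nash recursion \cref{eq:recursive2} over active edges yields the identity $M_f + \sum_i \nu_i \tau_i = D \sum_i \nu_i - \sum_i \nu_i \theta_{i-1}$, and inverting the two bounds on $M^*$ gives
\[\frac{D}{D^*} \;\le\; \min\!\left(\frac{D}{M_f + \tau_{\min}},\ \frac{D \sum_i \nu_i}{M_f + \sum_i \nu_i \tau_i}\right).\]
The core of the proof is to show that this minimum is always at most $\e/(\e-1)$, via an extremality argument in the spirit of \cref{claim:MaxPoA,claim:MaxPoA2Equal,partialDervPoAnegative}. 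Iteratively perturbing parameters reduces the worst-case instance to $\sum_i \nu_i = 1$, $\nu_i = 1/k$ for all $i$, and $\tau_i = 0$; in this canonical configuration both bounds evaluate to $1/(1-(1-1/k)^k)$, which tends to $\e/(\e-1)$ as $k \to \infty$. Tightness then follows from \cref{ex:tight} with $\tau_e = 0$ for all edges.

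The main obstacle is the extremality reduction itself. The two bounds are in tension: the first is tight when $\sum_i \nu_i > 1$ (Opt is inflow-bound and concentrates all of its rate on the shortest-transit edge), while the second is tight when $\sum_i \nu_i \le 1$ (Opt is capacity-bound). In the crossover regime neither bound is individually tight, and showing that perturbations towards the canonical configuration never decrease $D/D^*$ requires a careful case analysis on whether the inflow budget binds in the optimum. The crucial insight making the reduction go through is that adding any $\tau_i > 0$ shrinks $M_f$ but also inflates $D^*$ via either $\tau_{\min}$ or $\sum_i \nu_i \tau_i$ by at least the corresponding amount, so the ratio $D/D^*$ is maximised precisely when all transit times vanish; once $\tau_e = 0$, the problem reduces to the analysis already carried out in \cref{thm:0}.
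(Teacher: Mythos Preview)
Your second bound coincides exactly with the paper's computation: substituting \cref{cla:tht} into $M_f+\sum_i\nu_i\tau_i$ yields $\beta D+\gamma$ with $\beta=1-\prod_j(1-\nu_j)$ and $\gamma=\sum_j\nu_j\tau_j\bigl(1-\prod_{\ell>j}(1-\nu_\ell)\bigr)\ge 0$. The paper writes $D/D^*$ directly as $\alpha D/(\beta D+\gamma)$ with $\alpha=\sum_i\nu_i$, observes that this expression is non-decreasing in $D$ (equivalently, that one may drop the nonnegative term $\gamma$), and concludes that the worst case is $\alpha/\beta$, which is already independent of the~$\tau_i$. Thus the paper never performs any extremality reduction on the transit times; the single observation $\gamma\ge 0$ replaces that entire step, after which $\alpha/\beta$ is maximised over the capacities exactly as in \cref{claim:MaxPoA}.

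There is a genuine gap in your reduction, namely the ``crucial insight'' that $D/D^*$ is maximised when all $\tau_i=0$. Consider three links ordered by cost with $\nu_1=\nu_2=\nu_3=\tfrac12$, $\tau_1=\tau_3=0$, and $\tau_2\in[0,D/2]$. The layered Nash gives $M_f=\theta_3=\tfrac78 D-\tfrac14\tau_2$, while the optimum can absorb the full inflow rate $u=1$ on edges~$1$ and~$3$ alone (since $\nu_1+\nu_3=1$ and both have zero transit time), so $D^*=M_f$ and $D/D^*=D\big/\bigl(\tfrac78 D-\tfrac14\tau_2\bigr)$ is strictly \emph{increasing} in $\tau_2$. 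Here raising $\tau_2$ shrinks $M_f$ but does not inflate $D^*$ at all, because the optimum never touches edge~$2$; neither your $\tau_{\min}$ term nor your $\sum_i\nu_i\tau_i$ term controls the true $D^*$ when the surplus capacity sits on an edge with positive transit time. Your min-of-two-bounds device is a reasonable way to handle the $\sum_i\nu_i\gtrless 1$ crossover, but the monotonicity in~$\tau$ that you invoke to collapse the problem to \cref{thm:0} does not hold.
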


\begin{proof}
First recall that, as before, we can restrict our attention to layered Nash flows in parallel link networks with edges $e_1, \ldots, e_k$. Without loss of generality $u=1$ and $\nu_e \leq 1$ for all $e \in E$. Also recall that for the makespan-$PoA$ the goal is to send a fixed amount of flow in the shortest amount of time. Consider a Nash flow $f$ that sends $\theta_k$, as in \cref{cla:tht}, flow to $t$ before a deadline $D$. 
Let $D^*(\theta_k)$ be the optimal deadline for the given amount of flow $\theta_k$, i.e., $D^*(\theta_k)$ is such that $\sum_{j=1}^k \nu_j (D^*(\theta_k)-\tau_j) = \theta_k$. Solving for $D^*(\theta_k)$ gives
\begin{equation*}
    D^*(\theta_k) = \frac{\theta_k + \sum_{j=1}^k \nu_j \tau_j}{\sum_{j=1}^k \nu_j} \; .
\end{equation*}
%
Therefore, the makespan-price of anarchy we want to upper bound is
\begin{align*}
    \frac{D}{D^*(\theta_k)}
    & = \frac{D \cdot \sum_{j=1}^k \nu_j}{\theta_k + \sum_{j=1}^k \nu_j \tau_j} \\
    & = \frac{D \cdot \sum_{j=1}^k \nu_j}{D \left(1 - \prod_{j=1}^k \left(1-\nu_j \right) \right) - \sum_{j=1}^k \left( \nu_j \tau_j \prod_{\ell = j+1}^k \left(1 - \nu_j \right) \right) + \sum_{j=1}^k \nu_j \tau_j} \\
    & =: \frac{\alpha \cdot D}{\beta \cdot D + \gamma} \; , \\
    \textrm{where} & \\
%
%
    \alpha & = \sum_{j=1}^k \nu_j \text{, }
    \beta = 1 - \prod_{j=1}^k \left(1-\nu_j \right) \textrm{, and } 
    \gamma = \sum_{j=1}^{k} \nu_j \tau_j \left(1 - \prod_{\ell = j+1}^k (1-\nu_\ell)\right) \; .
\end{align*}
Observe that $\alpha, \beta, \gamma > 0$, and therefore this is non-decreasing 
in $D$. Hence, the makespan-price of anarchy is maximized for $D \rightarrow \infty$, in which case $\gamma$ vanishes from the ratio. Note that this is why the bound for this price of anarchy notion is independent of the $\tau$ values.

Consequently, the makespan-price of anarchy is maximized by maximizing $\alpha/\beta$. According to \cref{eq:partialderivativePoA} and the following analysis, this can be established by putting $\nu_e = \frac{1}{k}$ for all $e \in E$, which yields a makespan-price of anarchy that has a limit of $\e/(\e-1)$ for $k \rightarrow \infty$. Similarly, \cref{fig:ee-1istight} provides a tight example.
\end{proof}

\paragraph{Acknowledgments.} The authors thank Laura Vargas Koch for early stage discussions.

\bibliographystyle{plain}
\bibliography{references}

\end{document}